\PassOptionsToPackage{table}{xcolor}  

\documentclass[runningheads]{llncs}
\usepackage[T1]{fontenc}
\usepackage{amsmath, amssymb,amsfonts} %
\usepackage{graphicx}
\usepackage{textcomp}
\usepackage{xcolor}
\usepackage{algpseudocodex}
\usepackage[ruled, linesnumbered]{algorithm2e}
\usepackage{multirow}
\usepackage{float}
\usepackage{makecell}
\usepackage{hyperref}
\usepackage{cleveref}
\usepackage{tikz}
\usepackage{commath}
\usepackage{siunitx}
\usetikzlibrary{fit,calc,quantikz2}

\usepackage{orcidlink} 

\usepackage{wrapfig}

\usepackage{color}

\usepackage{multirow}
\usepackage{bm}
\newcommand{\tabincell}[2]{\begin{tabular}{@{}#1@{}}#2\end{tabular}}
\usepackage{comment}

\let\oldnl\nl
\newcommand{\nonl}{\renewcommand{\nl}{\let\nl\oldnl}} 

\usepackage{amssymb}

\usepackage[caption=false]{subfig}

\usepackage{enumitem}

\usepackage{booktabs}
\usepackage{pgfplots}
\pgfplotsset{compat=1.18,compat/show suggested version=false}

\usepackage{listings}
\newcommand{\rt}[1]{\textcolor{red}{#1}}

\usepackage[flushleft]{threeparttable}

\renewcommand{\ket}[1]{ \left| #1 \right \rangle}
\renewcommand{\bra}[1]{ \left\langle #1 \right|}

\newcommand{\psucc}{P_{\mathrm{succ}}}
\newcommand{\perr}{P_{\mathrm{err}}}
\newcommand{\pinc}{P_{\mathrm{inc}}}

\newcommand{\supmed}{^{\textnormal{MED}}}
\newcommand{\supmedplus}{^{{\textnormal{MED}}^+}}
\newcommand{\supuqsd}{^{\textnormal{UQSD}}} 

\newcommand{\lSDP}{\lambda}
\newcommand{\leval}{\lambda_{\mathrm{eval}}}
\newcommand{\clb}{c_{\mathrm{lb}}} 

\DeclareMathOperator{\Tr}{Tr} 

\Crefname{equation}{Eq.}{Eqs.}

\crefname{appendix}{appendix}{appendices}
\Crefname{appendix}{Appendix}{Appendices}

\begin{document}

\title{
Error-Tolerant Quantum State Discrimination: Optimization and Quantum Circuit Synthesis
}

\author{
Chien-Kai Ma\inst{2,4,5}\orcidlink{0009-0006-3388-1467}
\and 
Bo-Hung Chen\inst{1,2,4,5}\orcidlink{0000-0003-3709-0225}
\and
Tian-Fu Chen\inst{3,4,5}\orcidlink{0009-0003-5947-6206}
\and
Dah-Wei Chiou\inst{1,2,4,5}\orcidlink{0000-0001-5049-0333}
\and
Jie-Hong~R.~Jiang\inst{1,2,4,5}\orcidlink{0000-0002-2279-4732}
}

\institute{
Department of Electrical Engineering, National Taiwan University, Taiwan 
\and
Graduate Institute of Electronics Engineering, National Taiwan University, Taiwan
\and
Graduate School of Advanced Technology, National Taiwan University, Taiwan 
\and
Center for Quantum Science and Engineering, National Taiwan University, Taiwan 
\and
Physics Division, National Center for Theoretical Sciences, Taiwan
\email{r11943106@ntu.edu.tw, kenny81778189@gmail.com, d11k42001@ntu.edu.tw, dwchiou@gmail.com, jhjiang@ntu.edu.tw}
}

\titlerunning{Error-tolerant quantum state discrimination}
\authorrunning{C.-K. Ma, B.-H. Chen, T.-F. Chen, D.-W. Chiou, J.-H. R. Jiang }

\maketitle


\begin{abstract}
We develop \emph{error-tolerant} quantum state discrimination\linebreak (QSD) strategies that maintain reliable performance under moderate noise. Two complementary approaches are proposed: \emph{CrossQSD}, which generalizes unambiguous discrimination with tunable confidence bounds to balance accuracy and efficiency, and \emph{FitQSD}, which optimizes the measurement outcome distribution to approximate that of the ideal noiseless case. Furthermore, we provide a unified \emph{hybrid-objective QSD} framework that continuously interpolates between minimum-error discrimination (MED) and FitQSD, allowing flexible trade-offs among competing objectives. The associated optimization problems are formulated as convex programs and efficiently solved via disciplined convex programming or, in many cases, semidefinite programming. Additionally, a circuit synthesis framework based on a modified Naimark dilation and isometry synthesis enables hardware-efficient implementations with substantially reduced qubit and gate resources. An open-source toolkit automates the full optimization and synthesis workflow, providing a practical route to QSD on current quantum devices.
\end{abstract}


\section{Introduction}
\emph{Quantum state discrimination} (QSD) \cite{QSD1,QSD2,QSD4} is a fundamental problem in quantum information science that concerns how to identify, as accurately and efficiently as possible, the state of a quantum system known to be prepared in one of several predefined states. This task underpins many areas of quantum technology, including quantum cryptography, quantum communication, and quantum computing \cite{bb84,RN492}. Unlike classical states, quantum states can be nonorthogonal and therefore cannot be perfectly distinguished by any physical measurement. This intrinsic indistinguishability is not a technological limitation but a fundamental feature of quantum mechanics. It plays a pivotal role in the security of quantum key distribution protocols such as BB84 \cite{bb84} and SARG04 \cite{RN492}, where information is encoded in nonorthogonal states so that any eavesdropping attempt inevitably disturbs the system and can thus be detected.

Over the years, several strategies have been developed to address QSD, each offering a distinct trade-off between accuracy, confidence, and efficiency. The most widely studied approaches are \emph{minimum-error discrimination} (MED) \cite{QSD1,MED1}, which minimizes the average probability of incorrectly identifying a state, and \emph{unambiguous quantum state discrimination} (UQSD) \cite{QSD4,QSD3}, which permits inconclusive outcomes to ensure that every conclusive result is error-free. UQSD, for instance, plays a key role in the B92 protocol \cite{RN485}, where each received state must be either reliably identified or discarded.
Beyond these, other strategies pursue different optimization objectives. \emph{Maximum-confidence discrimination} \cite{RN519} aims to maximize the confidence that a conclusive outcome correctly identifies the state, even at the cost of allowing occasional errors or inconclusive results, while \emph{worst-case a posteriori probability discrimination} \cite{kosut2004quantumstatedetectordesign} seeks to maximize the worst-case posterior probability of being correct, ensuring robust performance in the least favorable cases. Together, these approaches illustrate the diverse operational perspectives and optimization criteria that underlie the general problem of quantum state discrimination.

In realistic settings, however, unavoidable noise introduces additional errors and can significantly undermine the performance of conventional strategies, which are typically formulated under ideal, noiseless assumptions. For example, when depolarizing noise affects all predefined states, no matter how small the noise is, UQSD will always yield an inconclusive outcome, rendering the method ineffectual. To address such challenges, the \emph{fixed rate of inconclusive outcome} (FRIO) method \cite{FRIO1,FRIO2} has been proposed as a compromise between MED and UQSD: it relaxes the guarantee of conclusive correctness while enforcing an upper bound on the rate of inconclusive outcomes.

In this work, we further tackle the issue of noise from multiple perspectives by developing \emph{noise-aware} approaches that maintain reliable performance under realistic noisy conditions. Specifically, we propose two novel error-tolerant QSD strategies, along with some auxiliary methods that account for different aspects of error tolerance. The first strategy, \emph{CrossQSD}, generalizes UQSD by incorporating tunable confidence bounds associated with false positive and false negative errors,\footnote{A false negative with respect to a specific predefined state $\rho_i$ refers to the event that the given state is $\rho_i$ but it is not identified as $\rho_i$, i.e., the outcome corresponds to $\rho_j$ with $j \neq i$. Conversely, a false positive for $\rho_i$ refers to the event that the outcome identifies the state as $\rho_i$ while the given state is actually $\rho_j$ with $j \neq i$.}
thereby enabling a more controlled balance between accuracy, confidence, and efficiency. The second strategy, \emph{FitQSD}, optimizes the probability distribution of measurement outcomes to reproduce, as closely as possible, that of the ideal noiseless UQSD case, thus preserving discrimination performance even in the presence of noise. Furthermore, MED and FitQSD can be integrated into a unified \emph{hybrid-objective QSD} framework, offering a continuous interpolation between MED and UQSD and enabling a flexible trade-off among multiple optimization objectives under noisy conditions.\footnote{Error-tolerant QSD is a relatively new and underexplored topic. To the best of our knowledge, the only prior approach that directly addresses noise effects is the FRIO framework \cite{FRIO1,FRIO2}. As practical methods, our approaches should therefore be viewed as complementary to FRIO rather than directly comparable. A fair and meaningful comparison would require a broader context---e.g., in quantum communication or cryptography---which lies beyond the scope of this work and warrants further investigation. Our methods, together with FRIO, can serve as tools for such future studies.}

The optimization problems underlying both CrossQSD and FitQSD, as well as all auxiliary and hybrid methods, can be elegantly expressed within the general framework of convex optimization \cite{boyd2004convex,MED3} and efficiently solved using disciplined convex programming (DCP) \cite{grant2006disciplined} or its parametric extension, disciplined parametrized programming (DPP) \cite{agrawal2019differentiable}. In many instances, these formulations further reduce to semidefinite programming (SDP) \cite{vandenberghe1996semidefinite,RN23}, which allows even more efficient numerical solutions.
This formulation naturally yields the optimal positive operator-valued measure (POVM) and readily extends to cases involving both mixed and pure predefined states, thereby enabling a systematic investigation of noise effects on discrimination performance and the comparative advantages of different approaches under various conditions.

Even when the optimal POVM for a given QSD strategy is known, its physical realization on quantum hardware is often nontrivial. Although quantum circuit synthesis for POVMs has been studied, a general synthesis flow tailored for QSD problems has been lacking. To bridge this gap, as our second main contribution, we develop a circuit synthesis framework based on a modified version of Naimark's dilation theorem \cite{naimark1943representations} and the state-of-the-art isometry circuit  synthesis \cite{RN12}, enabling efficient implementation of QSD strategies on quantum circuits. For specific cases, such as coherent states relevant to optical quantum systems \cite{gazeau2009coherent,arecchi1972atomic}, we demonstrate that our approach needs fewer qubits and can substantially reduce gate counts and circuit depth, with only a modest trade-off in discrimination success probability.

Although our primary focus is on the discrimination of multi-qubit states within quantum-circuit architectures, the constructed circuits faithfully capture the essential information flow of the measurement process and can be adapted to other physical platforms where basic quantum gates are available.
To support practical implementations, we also provide an open-source toolkit that automates the optimization and synthesis of the proposed QSD strategies. With these tools and methodologies, our work establishes a systematic foundation for performing practical QSD on current and emerging quantum hardware.

\section{Error-Tolerant Quantum State Discrimination}
\label{sec:error_uqsd}


The problem of QSD can be formulated in its most general form as follows.  
Given a predefined set of $k$ quantum states $\{\rho_{1}, \rho_{2}, \dots, \rho_{k}\}$ (each being either a pure or mixed state), Alice prepares one of them with prior probabilities $\{p_1, p_2, \dots, p_k\}$ satisfying $\sum_{i=1}^{k} p_i = 1$, and sends it to Bob through a quantum channel $\mathcal{E}: \rho_i \mapsto \mathcal{E}(\rho_i) = \rho'_i$, which may or may not introduce noise. Upon receiving $\rho'_i$, Bob performs a POVM measurement characterized by a POVM $\{\Pi_{1}, \Pi_{2}, \dots, \Pi_{k}\}$, where each element $\Pi_i$ is a positive-definite operator corresponding to the outcome by which Bob guesses the prepared state to be $\rho_i$. Alternatively, if an inconclusive outcome is allowed, the measurement is extended to $\{\Pi_{1}, \Pi_{2}, \dots, \Pi_{k}, \Pi_{k+1} \equiv \Pi_{?}\}$, where $\Pi_{?}$ represents the operator associated with the inconclusive result.  
The problem of QSD is to solve the optimal POVM, by which Bob can identify the prepared state as accurately and efficiently as possible under various intended conditions or considerations.

Our toolkit supports multiple types of optimization strategies and is readily extensible to any strategy that can be formulated as a convex optimization problem with constraints expressed in terms of POVM operators. In this section, we formulate the CrossQSD and FitQSD methods, together with other modified QSD schemes, each targeting different optimization objectives. We demonstrate the efficacy and robustness of these schemes on predefined sets of nearly orthogonal states and truncated coherent states. For simplicity, noise in our experiments is modeled by a depolarizing channel
\begin{equation}
\mathcal{E}_\lambda(\rho): \rho \mapsto (1-\lambda)\rho + \lambda \frac{I}{d},
\end{equation}
where $\lambda$ denotes the noise level and $d$ is the Hilbert-space dimension.

\subsection{Modified Minimum Error Discrimination (MED${}^+$)}

We first introduce MED$^+$, a straightforward extension of the conventional MED scheme that allows an inconclusive outcome, similar to UQSD. The optimization problem can be cast as a semidefinite program:
\begin{subequations}
\begin{align}
      \underset{\{\Pi_i\}}{\text{maximize}}\qquad   & \psucc = \sum_{i=1}^{k} p_{i} \operatorname{Tr}(\rho_{i} \Pi_{i}), \\
      \text{subject to} \qquad                    & \Pi_{i} \succeq 0, \quad i = 1, \ldots, k+1, \\
                                            & \sum_{i=1}^{k+1} \Pi_i = I,
\end{align}
\end{subequations}
where the optimization objective is the ``success probability'' $\psucc$---the overall probability of correctly identifying the prepared state---and $A \succeq B$ indicates that $A-B$ is positive semidefinite.

Although extending MED to MED${}^+$ intuitively enlarges the feasible space and appears to enhance optimality, it in fact provides no additional benefit, as established in the following theorem, whose proof is provided in \Cref{sec:proof-1}.

\begin{theorem} \label{thm:med_medplus}
Let both MED and MED${}^+$ be applied under the same predefined states $\{\rho_{1}, \rho_{2}, \dots, \rho_{k}\}$ and prior probabilities $\{p_{1}, p_{2}, \dots, p_{k}\}$. Then, in the optimal MED${}^+$ solution, the inconclusive operator necessarily vanishes, i.e., $\Pi_? = 0$, rendering the optimal MED${}^+$ identical to the optimal MED. In particular, the optimal success probabilities of the two schemes are equal, i.e.,
\begin{equation}
(\psucc)\supmed = (\psucc)\supmedplus.
\end{equation}
\end{theorem}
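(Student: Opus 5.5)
The plan is to compare the two feasible sets directly and show that any inconclusive weight in an MED${}^+$ POVM can be reassigned to a conclusive outcome without lowering $\psucc$.

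\textbf{Step 1: MED is a special case of MED${}^+$.} Every feasible MED POVM $\{\Pi_1,\dots,\Pi_k\}$ becomes a feasible MED${}^+$ POVM once we set $\Pi_?=0$, and its success probability is unchanged. Hence $(\psucc)\supmed \le (\psucc)\supmedplus$.

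\textbf{Step 2: reverse inequality by reallocation.} Take any feasible MED${}^+$ POVM $\{\Pi_1,\dots,\Pi_k,\Pi_?\}$ and fix any index $j$. Define a $k$-outcome POVM by
\[
\Pi'_j=\Pi_j+\Pi_?, \qquad \Pi'_i=\Pi_i \quad (i\neq j).
\]
This is feasible for MED: positivity is preserved because a sum of positive semidefinite operators is positive semidefinite, and the completeness relation $\sum_i \Pi'_i = I$ still holds. Its success probability is
\[
\psucc' = \psucc + p_j \operatorname{Tr}(\rho_j \Pi_?) \ \ge\ \psucc,
\]
since $\rho_j \succeq 0$ and $\Pi_? \succeq 0$ give $\operatorname{Tr}(\rho_j\Pi_?)\ge 0$. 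Applying this to an optimal MED${}^+$ solution yields $(\psucc)\supmed \ge (\psucc)\supmedplus$. Together with Step 1 this proves equality, and it shows that an optimal MED${}^+$ solution can always be taken with $\Pi_?=0$, i.e.\ an optimal MED POVM.

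\textbf{Step 3: when $\Pi_?$ must vanish.} This is the delicate point: the statement says $\Pi_?$ necessarily vanishes at the optimum, and that needs more than Step 2.

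\emph{Step 3a (orthogonality at the optimum).} Suppose an optimal solution has $\Pi_?\neq 0$. Optimality forbids a strict gain from the reallocation in Step 2, so
\[
p_j\operatorname{Tr}(\rho_j\Pi_?)=0 \quad \text{for every } j .
\]
For $p_j>0$, the condition $\operatorname{Tr}(\rho_j\Pi_?)=0$ with both operators positive semidefinite implies $\rho_j^{1/2}\Pi_?\rho_j^{1/2}=0$. Hence $\Pi_?$ is supported on the orthogonal complement of $\operatorname{supp}\rho_j$.

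\emph{Step 3b (spanning case).} If the supports of the states with nonzero prior jointly span the Hilbert space, then $\Pi_?=0$. This covers the noisy setting considered here, since $\mathcal{E}_\lambda(\rho_i)$ has full rank for $\lambda>0$.

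\emph{Step 3c (degenerate case).} If the supports do not span the space, $\Pi_?$ can only act on a common null subspace that no prepared state ever populates. There it has no operational effect and can be absorbed into any $\Pi_j$ at no cost. In that situation I would state the uniqueness claim as: $\Pi_?=0$ without loss of generality, and the two optimal schemes coincide on the relevant support.

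I expect Step 3c to be the main obstacle, since literal uniqueness fails there. I would handle it with this caveat, or by assuming that the supports span the space.
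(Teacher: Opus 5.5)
Your Steps 1 and 2 are the paper's argument. The paper embeds the optimal MED POVM into MED${}^+$ with $\Pi_?=0$, and conversely merges $\tilde\Pi_?$ into $\tilde\Pi_k$. You allow an arbitrary index $j$, and that is what makes your Step 3a work.

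The real difference is the ``necessarily vanishes'' claim. The paper closes it by asserting that equality in $(\psucc)^{\tilde K}\ge(\psucc)\supmedplus$ holds if and only if $\tilde\Pi_?=0$. In fact that equality holds if and only if $\Tr(\rho_k\tilde\Pi_?)=0$, which does not force $\tilde\Pi_?=0$. Your Step 3a applies the reallocation to every $j$ with $p_j>0$. This gives the correct necessary condition: $\Pi_?$ is supported on the orthogonal complement of every $\operatorname{supp}\rho_j$. Step 3b then supplies the correct sufficient hypothesis.

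Your caveat in 3c is therefore not a technicality to smooth over. Whenever the supports do not span $\mathcal{H}$, the literal statement is false. The paper's own noiseless states in \eqref{eq:2-qubits states} span only a three-dimensional subspace of the two-qubit space. Let $P$ be the projector onto that subspace and $\{\bar\Pi_i\}$ an optimal MED POVM, and set $\Pi_?=I-P$. The four operators $P\bar\Pi_1P$, $P\bar\Pi_2P$, $P\bar\Pi_3P$, $\Pi_?$ are feasible for MED${}^+$. They attain $(\psucc)\supmed$ because $P\rho_iP=\rho_i$, yet $\Pi_?\neq0$. The same construction works whenever the supports of the states with $p_j>0$ fail to span the space.

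So the correct statement is your version, which is sharper than what the paper's proof actually establishes:
\begin{itemize}
\item the optimal values are equal;
\item $\Pi_?=0$ can be assumed without loss of generality;
\item $\Pi_?=0$ necessarily holds exactly when the supports of the states with nonzero prior span the space, as for the full-rank noisy states $\mathcal{E}_\lambda(\rho_i)$ with $\lambda>0$.
\end{itemize}
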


Despite yielding no improvement in optimality, the distinction between MED and MED${}^+$ offers valuable conceptual insight. Specifically, \Cref{thm:med_medplus} immediately implies the following corollary, whose proof is provided in \Cref{sec:proof-2}.

\begin{corollary}
\label{corollary}
For the same predefined states and prior probabilities, the optimal success probability of MED is always greater than or equal to that of UQSD; that is,
\begin{equation}
(\psucc)\supmed \geq (\psucc)\supuqsd.
\end{equation}
\end{corollary}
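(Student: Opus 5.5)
The plan is to compare feasible sets. UQSD optimizes over POVMs $\{\Pi_1,\dots,\Pi_k,\Pi_?\}$ with $\Pi_i \succeq 0$ and $\sum_{i=1}^{k+1}\Pi_i = I$, together with the additional unambiguity constraints $\operatorname{Tr}(\rho_j \Pi_i) = 0$ for all $i \neq j$, $i,j\le k$. Its figure of merit is the same success probability $\psucc = \sum_{i=1}^k p_i \operatorname{Tr}(\rho_i\Pi_i)$. Every UQSD-feasible POVM therefore satisfies all constraints of the MED${}^+$ program, so the UQSD feasible set is a subset of the MED${}^+$ feasible set.

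Because the two objectives coincide, maximizing over the larger set gives a value at least as large. Taking the optimal UQSD POVM $\{\Pi_i^{\textnormal{UQSD}}\}$, which exists by compactness of the feasible set and continuity of the linear objective, we get
\begin{equation}
(\psucc)\supuqsd = \sum_{i=1}^k p_i \operatorname{Tr}\bigl(\rho_i \Pi_i^{\textnormal{UQSD}}\bigr) \le (\psucc)\supmedplus .
\end{equation}
Applying \Cref{thm:med_medplus}, which gives $(\psucc)\supmedplus = (\psucc)\supmed$, then yields $(\psucc)\supmed \ge (\psucc)\supuqsd$.

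There is essentially no hard step. The only care needed is to state the UQSD program precisely, so that it is clear it uses the same $(k+1)$-outcome POVM structure and the same success-probability objective as MED${}^+$, with only extra linear constraints added. In degenerate cases where no nonzero unambiguous POVM exists, the trivial POVM $\Pi_? = I$ is still feasible, with $\psucc = 0$, so the inequality holds trivially.

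As a remark, one could avoid \Cref{thm:med_medplus} by directly folding $\Pi_?$ into, say, $\Pi_1$. This gives a valid $k$-outcome MED POVM whose success probability is no smaller, since $p_1 \operatorname{Tr}(\rho_1\Pi_?) \ge 0$. That is exactly the mechanism behind the theorem, so invoking it is the natural route in the paper.
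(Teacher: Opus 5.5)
Your proposal is correct and matches the paper's proof: the UQSD feasible set is contained in the MED${}^+$ feasible set, both programs share the same success-probability objective, and \Cref{thm:med_medplus} then identifies the MED${}^+$ optimum with the MED optimum. Your side remarks (existence of the optimum by compactness, and folding $\Pi_?$ directly into a conclusive element) are sound but not needed for the argument.
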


This result highlights a fundamental distinction between the two strategies: MED allows a nonzero probability of error to maximize the overall success rate, whereas UQSD strictly forbids any incorrect identification and therefore achieves a smaller (or equal) success probability. In other words, UQSD prioritizes certainty over completeness, while MED optimizes performance under uncertainty.

\subsection{CrossQSD}
\label{sec:crossqsd}
To enable controlled tolerance of both false positive and false negative  errors within user-specified limits, we propose \emph{CrossQSD}, a relaxation of UQSD formulated as a semidefinite program:
\begin{subequations}
\begin{align}
    \underset{\{\Pi_i\}}{\text{maximize}} \qquad & 
        \psucc = \sum_{i=1}^{k} p_i \operatorname{Tr}\!\left[\mathcal{E}_{\lambda}(\rho_i) \Pi_i\right], \label{eq:crossqsd_obj} \\[3pt]
    \text{subject to} \qquad
        & \Pi_i \succeq 0, \quad i = 0, \ldots, k+1, \label{eq:crossqsd_psd} \\
        & \sum_{i=0}^{k+1} \Pi_i = I, \label{eq:crossqsd_sum} \\
        & p(\Pi_i \mid \rho_i) = \frac{p_i\, \operatorname{Tr}\!\left[\mathcal{E}_{\lambda}(\rho_i) \Pi_i\right]}
        {\sum_{j=1}^k p_i\, \operatorname{Tr}\!\left[\mathcal{E}_{\lambda}(\rho_i) \Pi_j\right]}
        \geq 1-\alpha_i, \quad i = 1, \ldots, k, \label{eq:crossqsd_type1} \\
        & p(\rho_i \mid \Pi_i) = \frac{p_i\, \operatorname{Tr}\!\left[\mathcal{E}_{\lambda}(\rho_i) \Pi_i\right]}
        {\sum_{j=1}^k p_j\, \operatorname{Tr}\!\left[\mathcal{E}_{\lambda}(\rho_j) \Pi_i\right]}
        \geq 1-\beta_i, \quad i = 1, \ldots, k, \label{eq:crossqsd_type2}
\end{align}
\end{subequations}
where $\alpha_i$ and $\beta_i$ are user-defined upper bounds on the false positive and false negative error tolerances for identifying $\rho_i$, and $\lambda$ denotes the noise level.  

CrossQSD provides a tunable compromise between accuracy, confidence, and efficiency. It is conceptually related to maximum-confidence discrimination~\cite{RN519} and worst-case a posteriori probability discrimination~\cite{kosut2004quantumstatedetectordesign}, but extends these schemes in two crucial ways: first, by explicitly taking noise into account via $\mathcal{E}_\lambda$; and second, by allowing simultaneous, independent control over both false positive and false negative error tolerances through the adjustable parameters $\alpha_i$ and $\beta_i$. This flexibility enables a continuous interpolation between the regimes of strict unambiguity and minimal error, offering a unified framework for QSD in the presence of noise.

To evaluate the performance of CrossQSD, we consider three coherent states truncated to an eight-dimensional Hilbert space (corresponding to three qubits) as the predefined noise-free states $\rho_i$, each assigned an equal prior probability $p_i = 1/3$. A coherent state $\ket{\alpha}$ is defined as the eigenstate of the annihilation operator $a$, satisfying $a\ket{\alpha} = \alpha\ket{\alpha}$. Specifically, we choose $\alpha = 1, e^{-i\pi/3}$, and $e^{-i2\pi/3}$. The tolerance bounds are set to $\alpha_i = \beta_i = 0.01$.

The noise level $\lambda$ is varied from $10^{-6}$ to $10^0$ to simulate different degrees of noise contamination. However, when solving the SDP optimization, we assume a \emph{fixed} evaluation noise level of $\leval = 0.01$ for $\lambda$, in order to deliberately examine the effect of a mismatch between the anticipated and actual noise.

The results, shown in \Cref{fig:crossqsd_results}, are presented in terms of the error-to-success ratio $\perr / \psucc \equiv (1 - \psucc)/\psucc$. The error-to-success ratio remains small as long as $\lambda$ is below $\leval$, whereas it increases significantly when $\lambda$ becomes much larger than $\leval$.
These findings demonstrate that, as long as the noise level is not underestimated and the error tolerances $\alpha_i$ and $\beta_i$ are chosen to be reasonably comparable to $\leval$, CrossQSD delivers consistently robust performance, even when the noise level is significantly overestimated due to incomplete knowledge of its characteristics.

\begin{figure}[ht]
    \centering
    \resizebox{0.93\textwidth}{!}{
        \begin{tikzpicture}
            \begin{axis}[
                    xlabel={Noise Level \(\lambda\)},
                    ylabel={Error-to-Success Ratio \(\perr / \psucc\)},
                    xmode=log,
                    ymode=log,
                    xmin=1e-6, xmax=1,
                    ymin=1e-3, ymax=10,
                    grid=major,
                    width=0.8\textwidth,
                    height=5.5cm,
                    legend pos=north west
                ]
                \addplot[mark=*, blue] coordinates {
                    (1.0000e-06, 0.00511489) (1.77828e-06, 0.00511527) (3.16228e-06, 0.00511596)
                    (5.62341e-06, 0.00511718) (1.0000e-05, 0.00511934) (1.77828e-05, 0.00512320)
                    (3.16228e-05, 0.00513005) (5.62341e-05, 0.00514223) (1.0000e-04, 0.00516389)
                    (1.77828e-04, 0.00520242) (3.16228e-04, 0.00527094) (5.62341e-04, 0.00539282)
                    (1.0000e-03, 0.00560968) (1.77828e-03, 0.00599567) (3.16228e-03, 0.00668318)
                    (5.62341e-03, 0.00790931) (1.77828e-02, 0.0140346)
                    (3.16228e-02, 0.0211461) (5.62341e-02, 0.0341741) (1.0000e-01, 0.0586295)
                    (1.77828e-01, 0.106708) (5.62341e-01, 0.487311) (1.0000e+00, 2.00000)
                };
                \addlegendentry{CrossQSD}
                \addplot[mark=square*, red, mark size=3pt] coordinates {
                    (1.0000e-02, 0.0101010)
                };
                \addlegendentry{\(\leval = 10^{-2}\)}
            \end{axis}
        \end{tikzpicture}
    } 
    \caption{CrossQSD: Error-to-success ratio \(\perr / \psucc\) vs.\ noise level $\lambda$. The predefined states are three coherent states truncated to three qubits; \(\alpha_i = \beta_i = 0.01\); $\leval=0.01$.}
    \label{fig:crossqsd_results}
\end{figure}
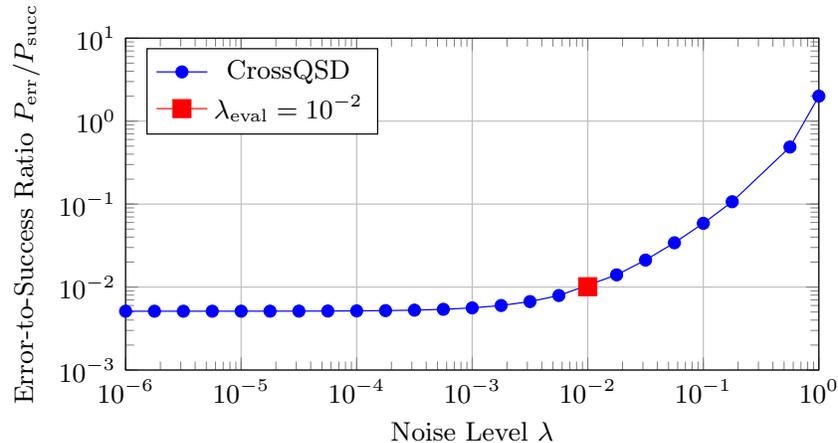

\subsection{FitQSD}
The other direction for extending UQSD to handle noise is the \emph{FitQSD} approach, which seeks to make the probability distribution of measurement outcomes under noise as close as possible to that of the corresponding noiseless UQSD, according to a chosen notion of ``closeness.''

Specifically, consider the joint probability that Alice sends $\rho_i$ while Bob identifies it as $\rho_j$, given by
\begin{equation}
p(\rho_i, \Pi_j) = p_i \Tr(\rho_i \Pi_j).
\end{equation}
Let $\{\Pi_i^0\}$ denote the optimal POVM for the noiseless UQSD, yielding the reference probability distribution $\{p(\rho_i, \Pi^0_j)\}$ for all $i,j = 1, \dots, k+1$. In the presence of noise, described by the quantum channel $\mathcal{E}_\lambda$, a QSD strategy with POVM $\{\Pi_i\}$ instead produces $\{p(\mathcal{E}\lambda(\rho_i), \Pi_j)\}$. The goal of FitQSD is to make these two probability distributions as close as possible, given that the reference $\{p(\rho_i,\Pi_j^0)\}$ is known or solved in advance.

The closeness can be quantified by an $L_p$-norm, leading to the following convex optimization formulation: \allowdisplaybreaks
\begin{subequations}
\begin{align}
 \underset{\{\Pi_i\}}{\text{minimize}}\qquad   & \sum_{i,j=1}^{k+1} \left| p(\rho_i,\Pi^0_j) - p(\mathcal{E}_\lambda(\rho_i),\Pi_j) \right|^\ell,                            \\ 
 \text{subject to}\qquad                     & \Pi_{i}\succeq 0, \quad i = 0, \ldots, k+1, \\
                                       & \sum_{i=0}^{k+1} \Pi_i = I,
\end{align}
\end{subequations} \allowdisplaybreaks
where $\ell>0$ corresponds to the index $p$ of the $L_p$-norm.
Particularly, the cases with $\ell = 1$ and $\ell = 2$ are referred to as \emph{FitQSD-MinL1} (minimizing the $L_1$ distance) and \emph{FitQSD-MinSS} (minimizing the sum of squares), respectively.

Alternatively, rather than minimizing an $L_p$-norm directly, one may instead maximize the success probability---similar to CrossQSD---while still maintaining a close match between the noisy and noiseless cases. This leads to the \emph{FitQSD-MECO} (minimizing error with constrained overlap) formulation, expressed as a semidefinite program:
\begin{subequations}
\begin{align}
\label{eq:maxpsucc_mindiff_obj}
& \underset{\{\Pi_i\}}{\text{maximize}} 
&  & \psucc = \sum_{i=1}^{k} p_{i} \Tr(\mathcal{E}_{\lambda}(\rho_{i}) \Pi_{i}),\\
& \text{subject to}                   
&  & \Pi_{i} \succeq 0, \quad i = 1, \ldots, k+1,\\
&                                     
&  & \sum_{i=1}^{k+1} \Pi_i = I,\\
&                                     
&  & p_{i} \operatorname{Tr}(\mathcal{E}_{\lambda}(\rho_{i}), \Pi_{i}) \leq p(\rho_i, \Pi^0_i), \quad \text{for } i=1,\ldots,k,\\
&                                     
&  & p_{i} \operatorname{Tr}(\mathcal{E}_{\lambda}(\rho_{i}), \Pi_{j}) \geq p(\rho_i, \Pi^0_j), \quad \text{for } i \neq j, \; i, j =1,\ldots,k.
\end{align}
\end{subequations}
Here, the reference probabilities $p(\rho_i, \Pi^0_i)$ act as upper bounds that prevent the optimization from overly boosting certain correct-identification rates\linebreak $p_i \Tr(\mathcal{E}{\lambda}(\rho_i) \Pi_i)$ at the expense of others.
Similarly, the reference probabilities $p(\rho_i, \Pi^0_j)$ for $i \neq j$ serve as lower bounds that prevent the optimization from excessively suppressing specific misidentification rates $p_i \Tr(\mathcal{E}{\lambda}(\rho_i) \Pi_j)$ in favor of others.
These comparative bounds constrain the optimized joint probability distribution $\{p(\mathcal{E}_{\lambda}(\rho_i), \Pi_j)\}$ to remain close to its noiseless reference $\{p(\rho_i, \Pi^0_j)\}$, thereby ensuring consistency and robustness.

To evaluate the performance of FitQSD, we consider the following three two-qubit entangled states with equal prior probabilities:
\begin{equation}\label{eq:2-qubits states}
\ket{\psi_i} = \frac{1}{\sqrt{1 + a_i^2}} \left( \ket{b_{i-1}} + a_i \ket{11} \right), \quad \mathrm{for}\quad i=1,2,3,
\end{equation}
where $a_1 = 0.2$, $a_2 = 0.5$, $a_3 = 0.7$, and $b_i$ denotes the binary representation of $i$. The noise level $\lambda$ is varied from $10^{-6}$ to $10^{-1}$ to simulate different noise regimes. We use MOSEK~\cite{mosek} 
to solve the convex optimization.

The results are presented in \Cref{fig:fitqsd_results_subfigs}. To ensure a fair comparison, the closeness between the noisy and noiseless distributions is consistently quantified using the $L_2$ distance across all three methods. FitQSD-MinL1 and FitQSD-MECO yield nearly identical success probabilities and $L_2$ distances, whereas FitQSD-MinSS exhibits noticeable deviations in both aspects. All three methods maintain high success probabilities as long as the noise level $\lambda$ remains relatively low. As $\lambda$ increases, their performance in $\psucc$ gradually deteriorates, with MinSS showing a milder decline. Conversely, the $L_2$ distance increases for all methods as $\lambda$ rises, but MinSS performs noticeably worse when $\lambda$ is small.
Therefore, if both success probability and distributional closeness between the noisy and noiseless measurement outcomes are to be optimized, MinL1 and MECO are preferable in the low-noise regime, whereas MinSS performs better under relatively high noise.

It is intriguing---and warrants further investigation---that MinL1 and MECO, though conceptually quite different, yield remarkably similar results, whereas MinL1 and MinSS, which are conceptually very similar, produce noticeably different outcomes. The close agreement between MinL1 and MECO is particularly valuable, as the latter can serve as a practical substitute for the former: MECO is formulated as a semidefinite program and can therefore be solved much more efficiently while achieving nearly the same performance. Furthermore, the FitQSD framework can be naturally extended to other values of $\ell$ in the $L_p$-norm, which may provide additional advantages under different noise characteristics---a direction that deserves further exploration.




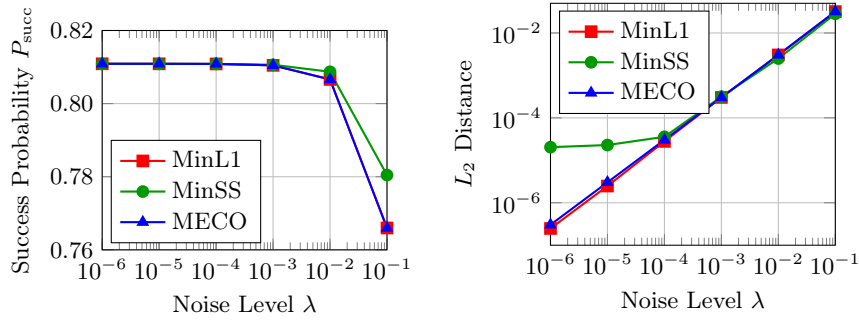
\begin{figure}[t]
  \centering
  \begin{minipage}{0.44\textwidth}
    \centering
    \begin{tikzpicture}
      \begin{axis}[
        xlabel={Noise Level $\lambda$},
        ylabel={Success Probability $\psucc$},
        xmode=log, xmin=1e-6, xmax=1e-1,
        ymin=0.76, ymax=0.82,
        width=\linewidth, height=4.5cm,
        grid=major,                
        legend pos=south west,
        legend cell align=left,
        yticklabel style={
          /pgf/number format/fixed,
          /pgf/number format/precision=2,
          /pgf/number format/fixed zerofill,
        },
      ]
        \node[anchor=south west] at (rel axis cs:0.01,0.98) {\small (a) $\psucc$};

        \addplot[red, thick, mark=square*] coordinates {
          (1.00e-06, 0.8108929) (1.00e-05, 0.8108909) (1.00e-04, 0.8108562)
          (1.00e-03, 0.8104668) (1.00e-02, 0.8066100) (1.00e-01, 0.7659851)
        };
        \addlegendentry{MinL1}

        \addplot[green!60!black, thick, mark=*] coordinates {
          (1.00e-06, 0.8108753) (1.00e-05, 0.8108731) (1.00e-04, 0.8108618)
          (1.00e-03, 0.8106197) (1.00e-02, 0.8087073) (1.00e-01, 0.7804561)
        };
        \addlegendentry{MinSS}

        \addplot[blue, thick, mark=triangle*] coordinates {
          (1.00e-06, 0.8108927) (1.00e-05, 0.8108889) (1.00e-04, 0.8108504)
          (1.00e-03, 0.8104660) (1.00e-02, 0.8066098) (1.00e-01, 0.7659851)
        };
        \addlegendentry{MECO}
      \end{axis}
    \end{tikzpicture}
  \end{minipage}
  \hspace{0.025\textwidth} 
  \begin{minipage}{0.44\textwidth}
    \centering
    \begin{tikzpicture}
      \begin{axis}[
        xlabel={Noise Level $\lambda$},
        ylabel={$L_2$ Distance},
        xmode=log, xmin=1e-6, xmax=1e-1,
        ymode=log, ymin=1e-7, ymax=5e-2,
        width=\linewidth, height=4.8cm,
        grid=major,                
        legend pos=north west,
        legend cell align=left,
        yticklabel style={
          /pgf/number format/sci,
          /pgf/number format/precision=1,
        },
      ]
        \node[anchor=south west] at (rel axis cs:0.01,0.98) {\small (b) $L_2$};

        \addplot[red, thick, mark=square*] coordinates {
          (1.00e-06, 2.48e-07) (1.00e-05, 2.48e-06) (1.00e-04, 2.79e-05)
          (1.00e-03, 3.03e-04) (1.00e-02, 3.04e-03) (1.00e-01, 3.19e-02)
        };
        \addlegendentry{MinL1}

        \addplot[green!60!black, thick, mark=*] coordinates {
          (1.00e-06, 2.04e-05) (1.00e-05, 2.29e-05) (1.00e-04, 3.58e-05)
          (1.00e-03, 3.16e-04) (1.00e-02, 2.50e-03) (1.00e-01, 2.83e-02)
        };
        \addlegendentry{MinSS}

        \addplot[blue, thick, mark=triangle*] coordinates {
          (1.00e-06, 3.03e-07) (1.00e-05, 3.03e-06) (1.00e-04, 3.03e-05)
          (1.00e-03, 3.03e-04) (1.00e-02, 3.04e-03) (1.00e-01, 3.19e-02)
        };
        \addlegendentry{MECO}
      \end{axis}
    \end{tikzpicture}
  \end{minipage}

  \caption{FitQSD: Success probability $\psucc$ (left) and $L_2$ distance (right) vs.\ noise level $\lambda$. The predefined states are given by \Cref{eq:2-qubits states}. The MOSEK solver precision is set to $10^{-9}$.}
  \label{fig:fitqsd_results_subfigs}
\end{figure}

\subsection{Hybrid-objective QSD} 
In addition to the previously discussed variants, we introduce the \emph{hybrid-objective QSD} scheme, which combines multiple optimization objectives to achieve a balanced trade-off among accuracy, performance, and robustness. This hybrid formulation provides a unified and flexible framework that encompasses and extends conventional approaches such as MED, UQSD, and their noise-aware generalizations. By jointly maximizing the success probability and minimizing the difference between the noisy and noiseless measurement outcomes---with a tunable parameter $w$ adjusting the trade-off between these two competing objectives---the hybrid-objective QSD enables a continuous interpolation between MED${}^+$ and FitQSD.

The corresponding semidefinite program is formulated, with $w\geq0$ and $\ell\geq0$, as \allowdisplaybreaks
\begin{subequations}
\begin{align}
\label{eq:roland_obj}
& \underset{\{\Pi_i\}}{\text{maximize}} 
&  & \sum_{i=1}^{k} p_{i} \Tr(\mathcal{E}_{\lambda}(\rho_{i}) \Pi_{i}) 
 - w \sum_{i,j =1}^{k+1} \left| p(\rho_i,\Pi^0_j) - p(\mathcal{E}_{\lambda}(\rho_i),\Pi_j) \right|^\ell,                      
\\
& \text{subject to}                   
&  & \Pi_{i} \succeq 0, \quad i = 1, \ldots, k+1,
\\
&                                     
&  & \sum_{i=1}^{k+1} \Pi_i = I.
\end{align}
\end{subequations}
Equivalently, this continuous interpolation can be viewed as one between MED ($w \to 0$) and UQSD ($w \to \infty$), since MED${}^+$ has been shown to be equivalent to MED, while FitQSD is designed to reproduce the measurement outcomes of the noiseless UQSD as closely as possible.

To evaluate the performance of the hybrid-objective QSD scheme, we conduct simulations with $\ell = 1$ for two test cases, each involving a different set of predefined states with equal prior probabilities. The first case uses the predefined states given in \eqref{eq:2-qubits states}, while the second case involves the two nonorthogonal single-qubit states:
\begin{equation}\label{eq:1-qubit states}
\ket{0} \quad \text{and} \quad \frac{1}{\sqrt{2}}\left(\ket{0}+\ket{1}\right).
\end{equation}
The corresponding results are shown in \Cref{fig:entangled_psucc,fig:nonortho_psucc}.
As before, the $L_2$ distance is employed to quantify the closeness between the noisy and noiseless probability distributions.

The performance remains robust and consistent over a broad range of relatively low noise levels, where both $P_{\mathrm{succ}}$ and the $L_2$ distance smoothly approach their respective MED values (horizontal dashed lines at the top) as $w$ decreases, and those of the noiseless UQSD (horizontal dotted lines at the bottom) as $w$ increases. As the noise level increases beyond a certain threshold, $P_{\mathrm{succ}}$ begins to degrade gradually, while the $L_2$ distance likewise departs from its low-noise behavior.\footnote{As the noise level $\lambda$ increases, all input states are progressively driven toward the maximally mixed state, i.e.,
$\mathcal{E}_\lambda(\rho_i)=\rho_i' \xrightarrow{\lambda \to 1} I/2^n$.
In this limit, the noisy states $\{\rho_i'\}$ become nearly identical and therefore essentially indistinguishable. Consequently, the MED (i.e., $w \to 0$ limit) POVM $\{\Pi_i\}$ reduces to the strategy of always guessing the state with the highest prior probability, and the corresponding success probability $P_{\mathrm{succ}}$ approaches $\max_i \{p_i\}$.
In contrast, the UQSD (i.e., $w \to \infty$ limit) POVM $\{\Pi_i\}$ reduces to the strategy of always reporting the inconclusive outcome, and $P_{\mathrm{succ}}$ thus approaches zero.}
Overall, these findings demonstrate that the hybrid-objective QSD scheme offers a reliable and robust continuous interpolation between MED and UQSD under moderate noise conditions.

\begin{figure}[b]
\centering
\includegraphics[width=0.48\columnwidth]{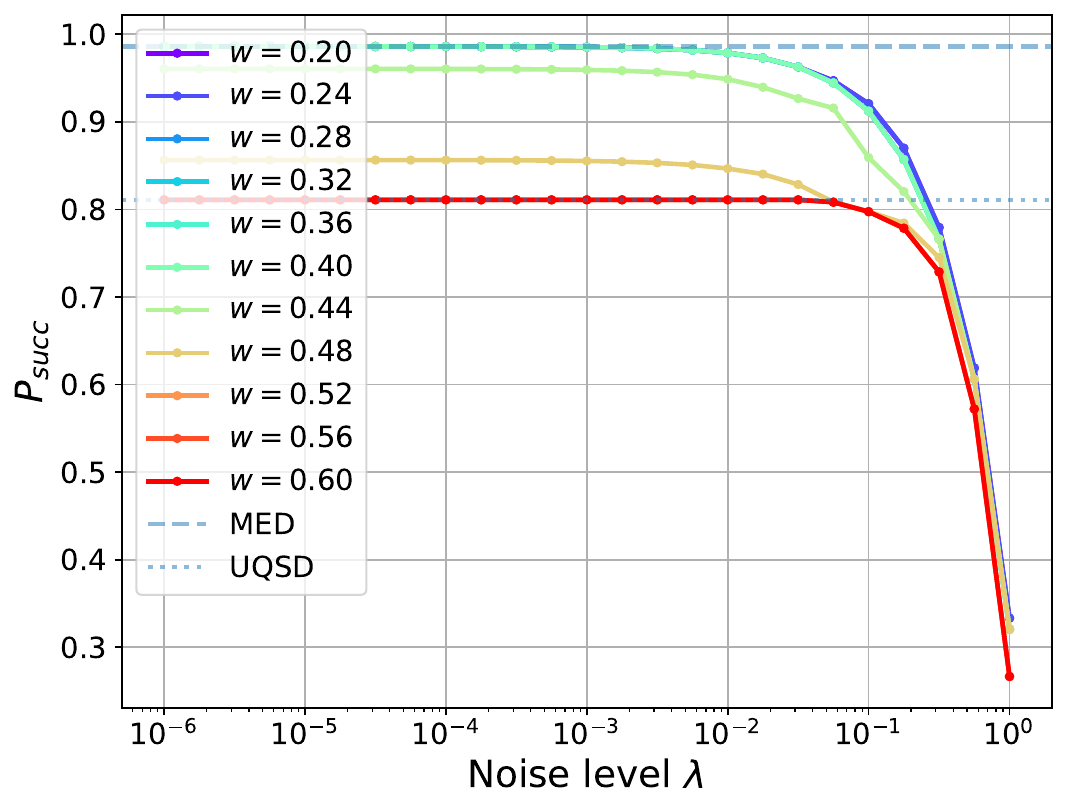}
\includegraphics[width=0.48\columnwidth]{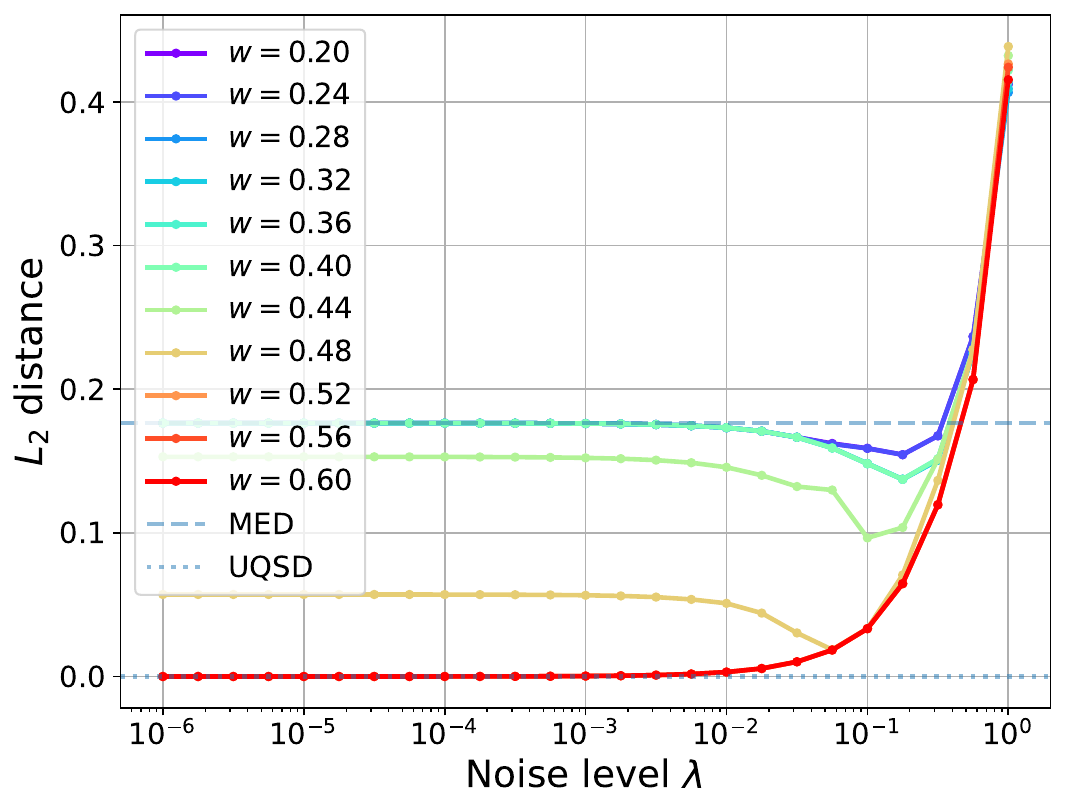}   
\caption{Hybrid-objective QSD with $\ell=1$ and the predefined states in \Cref{eq:2-qubits states}: Success probability $\psucc$ (left) and $L_2$ distance (right) vs.\ noise level $\lambda$.}
\label{fig:entangled_psucc}
\end{figure}
\begin{figure}[t]
\centering
\includegraphics[width=0.48\columnwidth]{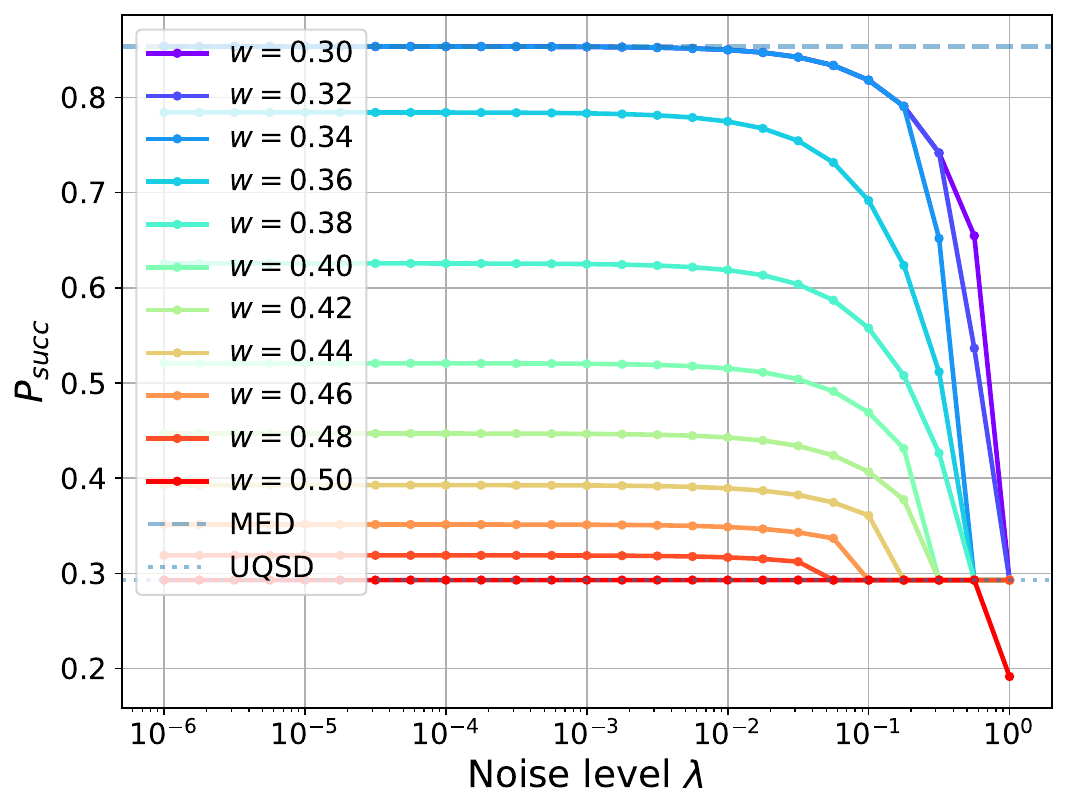}
\includegraphics[width=0.48\columnwidth]{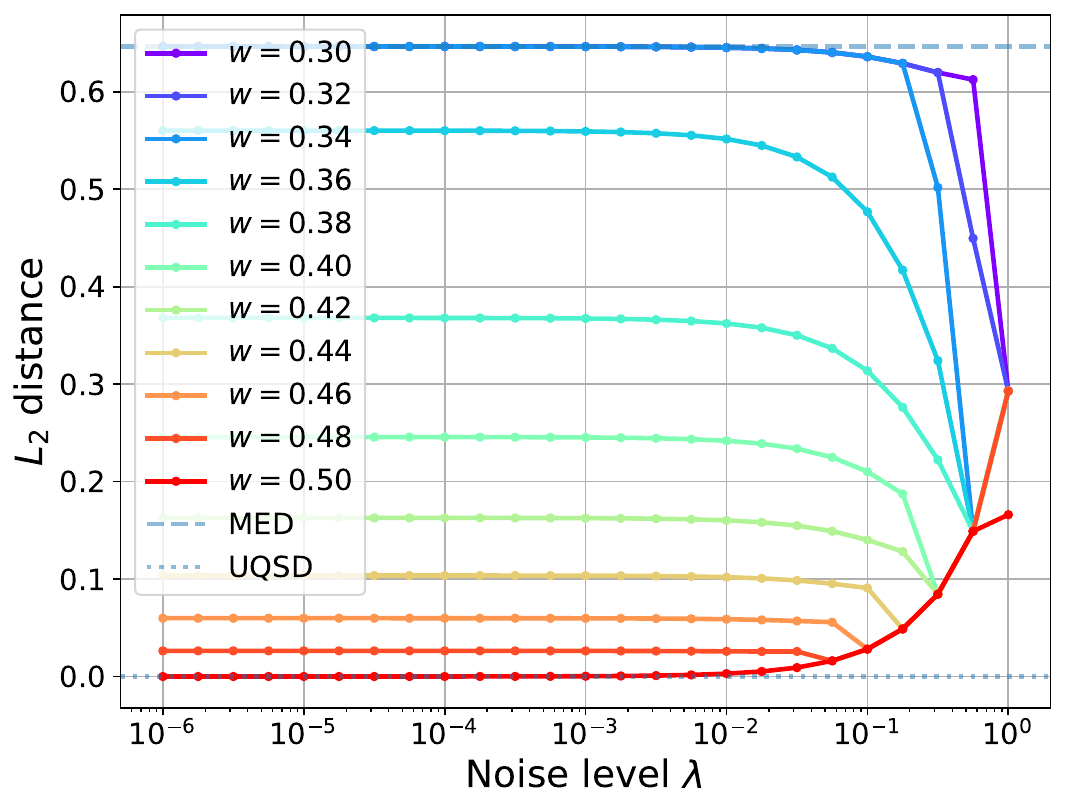}
\caption{Hybrid-objective QSD with $\ell=1$ and the predefined states in \Cref{eq:1-qubit states}: Success probability $\psucc$ (left) and $L_2$ distance (right) vs.\ noise level $\lambda$.}
\label{fig:nonortho_psucc}
\end{figure}


\section{Quantum Circuit Synthesis}
\label{sec:qc_qsd}

Given a POVM solution, we now turn to the problem of implementing a quantum circuit that realizes the corresponding measurement.
The synthesis of such a circuit involves two main steps:
(i) determining an \emph{isometry}---a norm-preserving linear map---that maps the POVM to a projection-valued measure (PVM) under specified constraints, and
(ii) constructing this isometry as an efficient quantum circuit that meets the desired resource and structural requirements.

Tailored specifically for QSD, our toolkit identifies the isometry that minimizes the number of required ancilla qubits using a novel approach, while the actual circuit-level realization of the isometry is handled by existing circuit-synthesis and circuit-optimization packages.
In the following, we first briefly review conventional circuit synthesis for POVMs and then introduce our new synthesis method designed for QSD, which leverages a modified version of Naimark's dilation theorem, as presented in \Cref{thm:min dilation}.

Let $\{\Pi_1, \Pi_2, \dots, \Pi_k\}$ be a POVM on a Hilbert space $\mathcal{H}$, and let \linebreak $\{P_1, P_2, \dots, P_k\}$ be a rank-1 PVM on an extended Hilbert space $\mathcal{H}' = \mathcal{H} \otimes \mathcal{H}_A$, where $\mathcal{H}_A$ is an ancillary system (dilation space). Naimark's dilation theorem guarantees the existence of an isometry
\begin{equation}\label{eq:def isometry}
 V: \mathcal{H}\rightarrow \mathcal{H}',   
\end{equation} 
satisfying
\begin{equation}\label{eq:VHV P}
V \Pi_i V^\dagger = P_i.   
\end{equation}
By choosing the rank-1 PVM in the form $P_i = I \otimes (\ket{i}_A \bra{i}_A)$, the corresponding isometry can then be explicitly constructed as
\begin{equation}\label{eq:dilation general}
V = \sum_{i=1}^k \sqrt{\Pi_i} \otimes \ket{i}_A,
\end{equation}
where $\sqrt{\,\cdot\,}$ denotes the matrix square root. Once $V$ is known, a quantum circuit can be synthesized using standard isometry circuit synthesis methods, such as the efficient one proposed in \cite{RN12}.
While this approach ensures that all POVM outcomes are encoded in the ancillary system and that the unmeasured subsystem yields the corresponding post-measurement states, it requires an ancillary space of at least dimension $k$, leading to $\operatorname{dim}\mathcal{H}' \geq k\operatorname{dim}\mathcal{H}$. This is generally wasteful for QSD, where the post-measurement states are largely irrelevant.


Instead of directly applying Naimark's dilation theorem to construct the isometry, we introduce the following modified version, whose proof is provided in \Cref{sec:proof-3}:
\begin{theorem}\label{thm:min dilation}
Let $\mathbf{\Pi} = \{\Pi_1, \Pi_2, \dots, \Pi_k\}$ be a POVM on a Hilbert space $\mathcal{H}_A$, and let $\mathbf{P} = \{P_1, P_2, \dots, P_\ell\}$ be a rank-1 PVM on another Hilbert space $\mathcal{H}_{A'}$.  
If the number of elements in $\mathbf{P}$ satisfies $\ell = \sum_{i=1}^k r_i$, where $r_i = \mathrm{rank}(\Pi_i)$, then there exists an isometry
\begin{equation}
V : \mathcal{H}_A \longrightarrow \mathcal{H}_{A'}
\end{equation}
such that
\begin{equation}\label{eq:tm2}
V \Pi_i V^\dagger = \sum_{j=1}^{r_i} P_{i,j}, \quad \text{for all } i = 1, \dots, k,
\end{equation}
where the rank-1 projectors in $\mathbf{P}$ are reindexed as
\begin{equation}
\{P_1, P_2, \dots, P_\ell\} = \{P_{1,1}, \dots, P_{1,r_1},\, P_{2,1}, \dots, P_{2,r_2},\, \dots,\, P_{k,1}, \dots, P_{k,r_k}\}.
\end{equation}
\end{theorem}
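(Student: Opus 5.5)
The plan is a direct construction of $V$ from spectral decompositions of the POVM elements, choosing $\mathcal{H}_{A'}$ to have dimension $\ell=\sum_i r_i$ with one basis vector per rank-one piece of each $\Pi_i$. Before constructing anything, I need to fix how \cref{eq:tm2} is read.

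\emph{Fixing the reading of \cref{eq:tm2}.} I read \cref{eq:tm2}, like \cref{eq:VHV P}, in the Naimark sense: the POVM is the pull-back of the PVM, i.e.
\begin{equation}
V^\dagger\Big(\sum_{j=1}^{r_i} P_{i,j}\Big)V=\Pi_i \quad\text{for all } i .
\end{equation}
The literal identity $V\Pi_iV^\dagger=\sum_j P_{i,j}$ cannot hold in general. Summing it over $i$ would give $VV^\dagger=I_{A'}$, forcing $\ell=\dim\mathcal{H}_A$. That happens only when every $\Pi_i$ is already a sum of orthogonal projectors. I expect settling this interpretational point to be the main obstacle. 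The construction itself is short.

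\emph{Spectral decomposition and choice of basis.} For each $i$, write the spectral decomposition of the positive operator $\Pi_i$ keeping only its nonzero eigenvalues:
\begin{equation}
\Pi_i=\sum_{j=1}^{r_i}\lambda_{i,j}\ket{v_{i,j}}\bra{v_{i,j}},
\end{equation}
with $\lambda_{i,j}>0$ and $\{\ket{v_{i,j}}\}_j$ orthonormal for fixed $i$ (vectors belonging to different $i$ need not be orthogonal). Since $\mathbf{P}$ is a rank-1 PVM with $\ell$ elements, $P_{i,j}=\ket{e_{i,j}}\bra{e_{i,j}}$ for an orthonormal basis $\{\ket{e_{i,j}}\}$ of $\mathcal{H}_{A'}$, reindexed as in the statement. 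Then define
\begin{equation}
V=\sum_{i=1}^k\sum_{j=1}^{r_i}\sqrt{\lambda_{i,j}}\,\ket{e_{i,j}}\bra{v_{i,j}} .
\end{equation}

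\emph{Verification.}
\begin{itemize}[label={}, leftmargin=0pt]
\item \emph{$V$ is an isometry.} By orthonormality of the $\ket{e_{i,j}}$,
\begin{equation}
V^\dagger V=\sum_{i,j}\lambda_{i,j}\ket{v_{i,j}}\bra{v_{i,j}}=\sum_i\Pi_i=I_A .
\end{equation}
This also shows $\ell\ge\dim\mathcal{H}_A$, since $\sum_i r_i\ge\operatorname{rank}\big(\sum_i\Pi_i\big)=\dim\mathcal{H}_A$. So $\mathcal{H}_{A'}$ is large enough to contain an isometric image of $\mathcal{H}_A$.
\item \emph{The pull-back identity.} Because $P_{i,j}$ selects the single term indexed by $(i,j)$, we get $V^\dagger P_{i,j}V=\lambda_{i,j}\ket{v_{i,j}}\bra{v_{i,j}}$. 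Summing over $j$ recovers $\Pi_i$.
\end{itemize}

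\emph{Resulting measurement.} Measuring the PVM $\mathbf{P}$ on $V\rho V^\dagger$ and coarse-graining the outcomes $(i,j)\mapsto i$ reproduces the statistics $\Tr(\rho\Pi_i)$.

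\emph{Ancilla saving.} Finally, I would remark why this saves ancillas. Here $\dim\mathcal{H}_{A'}=\sum_i\operatorname{rank}\Pi_i$, which in QSD (where the $\Pi_i$ for $i\le k$ are typically rank one) is far smaller than the $k\dim\mathcal{H}$ required by \cref{eq:dilation general}. Only $\lceil\log_2\ell\rceil$ qubits are needed, with any unused basis states of a qubit register simply left outside the range of $V$.
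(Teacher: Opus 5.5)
Your construction is the paper's own. You decompose each $\Pi_i=\sum_{j}|\tilde f_{i,j}\rangle\langle\tilde f_{i,j}|$ with $|\tilde f_{i,j}\rangle=\sqrt{\lambda_{i,j}}\,|v_{i,j}\rangle$, take $V=\sum_{i,j}|g_{i,j}\rangle\langle\tilde f_{i,j}|$ for the PVM basis, and get $V^\dagger V=\sum_i\Pi_i=I_A$.

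Your pull-back reading is not a mere convention but the statement that is actually true. The paper's proof claims $VF_{i,j}V^\dagger=P_{i,j}$ follows ``immediately'', but this fails unless the $|\tilde f_{i,j}\rangle$ are orthonormal, i.e.\ unless the POVM is projective, exactly as your summing argument shows. What the paper's $V$ really satisfies is $V^\dagger P_{i,j}V=F_{i,j}$, which is what you verify.
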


This construction allows us to map a POVM to a PVM via the isometry
\begin{equation}\label{eq:V new method in main text}
V = \sum_{i=1}^k \sum_{j=1}^{r_i} \ket{g_{i,j}}_{A'} |\tilde{f}_{i,j}\rangle_A,
\end{equation}
as defined in \Cref{eq:V new method}.
This requires only $\operatorname{dim}\mathcal{H}' \ge \ell$, which is in general substantially smaller than the more wasteful requirement $\operatorname{dim}\mathcal{H}' \ge k\operatorname{dim}\mathcal{H}$ in the previous approach based on \Cref{eq:dilation general}, i.e., typically, $\ell \ll  k\operatorname{dim}\mathcal{H}$ as $r_i \ll \operatorname{dim}\mathcal{H}$. Moreover, if the states $\ket{g_{i,j}}$ associated with $P_{i,j} = \ket{g_{i,j}}\bra{g_{i,j}}$ are chosen to be readily measurable---for instance, $\ket{g_{i,j}} = \ket{\sum_{k=1}^{i-1} r_k + j}$ as computational-basis states---then the POVM measurement can be effectively implemented as a PVM measurement.
Unlike the previous approach, however, each $\Pi_i$ now might correspond to multiple projectors $P_{i,j}$ for $j = 1, \dots, r_i$, and the post-measurement state is no longer available.

In the context of quantum circuits, by considering $\mathcal{H}_A \subset \mathcal{H}_{A'}$, the POVM can thus be systematically realized through the following procedure:
(i) embed any to-be-measured state $\rho \in \mathcal{D}(\mathcal{H}_A)$ into $\mathcal{D}(\mathcal{H}_{A'})$ by appending ancilla qubits;
(ii) apply the quantum gate that implements the isometry $V$;
(iii) perform a computational-basis measurement; and
(iv) whenever the measurement outcome associated with $P_{i,j}$ is obtained, record it as a $\Pi_i$ event.

We provide a toolkit that automatically identifies the isometry that requires the fewest ancilla qubits and generates the corresponding quantum circuit. In addition, the toolkit supports an approximation mode that achieves significantly higher circuit efficiency at the cost of a small, controllable precision loss.
Once a precision threshold\footnote{The approximation is intended to remove numerically insignificant components of the POVM operators that typically arise from numerical noise in the SDP solver. It is generally difficult to distinguish numerical artifacts from physically meaningful small components in an adaptive manner. For this reason, we expose the truncation threshold as a tunable parameter for the user.} $\delta$ is specified, the terms $|\tilde{f}_{i,j}\rangle$ in \Cref{eq:ldl} with $\sigma_{i,j} < \delta$ are discarded, yielding a truncated set of $|\tilde{f}_{i,j}\rangle$.
Removing these components in \Cref{eq:V new method in main text} renders $V$ no longer a strict isometry, but standard isometry-synthesis methods can still be applied to obtain an approximate realization based on the truncated set.
Although this approximation slightly reduces the overall gate fidelity, it offers a substantial advantage: the number of required ancilla qubits is further reduced, and both the gate count and circuit depth are significantly improved.

Once the isometry $V$ is determined, the toolkit synthesizes the corresponding quantum circuit at the abstract hardware level using the isometry-synthesis method~\cite{RN12} implemented in the \texttt{qclib} package~\cite{qclib}.
When circuit approximation is desired, the toolkit further applies the approximate quantum compiling technique introduced in~\cite{dc_gm}, which reduces circuit depth while maintaining high fidelity.
Finally, if a specific hardware topology is provided through Qiskit's backend class, the toolkit leverages Qiskit's \texttt{transpile} function to perform circuit resynthesis, generating a hardware-efficient implementation.
The overall synthesis workflow is illustrated in \Cref{fig:qc_workflow}.

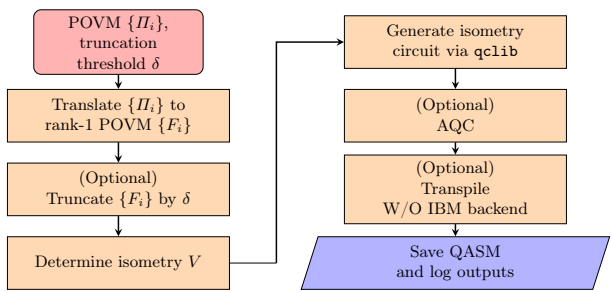
\begin{figure}[htbp]
\centering
\scalebox{0.7}{
        \usetikzlibrary{shapes.geometric, arrows.meta, positioning}

\tikzstyle{startstop} = [rectangle, rounded corners,
minimum width=3cm,
minimum height=1cm,
text centered,
text width=3cm,
draw=black,
fill=red!30]

\tikzstyle{io} = [trapezium,
trapezium stretches=true, 
trapezium left angle=70,
trapezium right angle=110,
minimum width=3cm,
minimum height=1cm,
align=center,
text width=5cm,
text centered,
draw=black, fill=blue!30]

\tikzstyle{process} = [rectangle,
minimum width=3cm,
minimum height=1cm,
text centered,
text width=4cm,
draw=black,
fill=orange!30]

\tikzstyle{decision} = [diamond,
minimum width=3cm,
minimum height=1cm,
text width=4cm,
aspect=2,
inner sep=-1ex,
text centered,
draw=black,
fill=green!30, 
aspect=2]
\tikzstyle{arrow} = [thick,->,>=stealth]

\begin{tikzpicture}[node distance= 0.4 cm and 0.2 cm]


    \node (start) [startstop] {POVM $\{\Pi_i\}$, \\truncation threshold $\delta$};
    \node(convert) [process, below of= start, yshift=-1cm, align=center] {Translate $\{\Pi_i\}$ to rank-1 POVM $\{F_i\}$};
    \node(truncate) [process, below of= convert, yshift=-1cm,align=center] {(Optional) \\Truncate $\{F_i\}$ by $\delta$};

    
    
    \node (iso) [process, below of= truncate, yshift=-1cm,align=center] {Determine isometry $V$};
    \coordinate (pt) at ($(iso) + (3, 0)$);
    \node (synth) [process, right of= start, xshift=6cm] {Generate isometry circuit via \texttt{qclib}};
    \node (approx) [process, below of= synth, yshift=-1cm] {(Optional)\\AQC};
    \node (transpile) [process, below of= approx, yshift=-1cm] {(Optional)\\Transpile\\ W/O IBM backend};
    \node (save) [io, below of= transpile, yshift=-1cm] {\shortstack{Save QASM \\ and log outputs}};

    \draw [arrow] (start) -- (convert);
    \draw [arrow] (convert) -- (truncate);
    \draw [arrow] (truncate) -- (iso);
    \draw (iso) -- (pt);
    \draw [arrow] (pt) |- (synth);
    \draw [arrow] (synth) -- (approx);
    \draw [arrow] (approx) -- (transpile);
    \draw [arrow] (transpile) -- (save);
\end{tikzpicture}
    }
\caption{Quantum circuit synthesis workflow for QSD.}
\label{fig:qc_workflow}
\end{figure}

We demonstrate a simple example of circuit synthesis using our toolkit, with the performance results summarized in \Cref{tab:med_circuit_size}. The predefined quantum states are specified in \Cref{eq:2-qubits states}, the prior probabilities are chosen uniformly as $p_1 = p_2 = p_3 = 1/3$, and the noise is turned off ($\lambda = 0$). The isometry corresponding to the optimal MED POVM $\{\Pi_i\}$ is first determined and subsequently approximated as $\{\tilde{\Pi}_i\}$ by truncating numerically insignificant components of the POVM operators according to a precision threshold $\delta = 10^{-4}$, using the optimization procedures provided by the toolkit. The resulting quantum circuit is then generated using the cosine--sine decomposition implemented in~\cite{qclib} without applying any additional circuit-level optimizations. 

As shown in the table, the total POVM rank is significantly reduced by the approximation (i.e., $\sum_i \mathrm{rank}(\tilde{\Pi}_i) < \sum_i \mathrm{rank}(\Pi_i)$). Correspondingly, the circuit complexity decreases substantially, while the QSD performance remains virtually unchanged: the differences in the outcome probabilities $p(\Pi_i|\rho_i)$ are negligible. This highlights a key advantage of the approximation scheme, which yields a considerable reduction in circuit complexity with almost no compromise in discrimination performance.

\begin{table}[ht]
    \centering
    \caption{Comparison of circuit complexities, total POVM ranks, and QSD outcome probabilities for the optimal MED, with and without approximation-based optimization (precision threshold $\delta = 10^{-4}$).}
    \label{tab:med_circuit_size}
    \renewcommand{\arraystretch}{1.2}
    \begin{tabular}{l | c@{\hskip 1.5mm}cc | c@{\hskip 1.9mm}cc}
        \hline
        Circuit specs & \multicolumn{3}{c|}{\textbf{Without Approximation}} & \multicolumn{3}{c}{\textbf{With Approximation}} \\
        \hline
        Ancilla qubits  & \multicolumn{3}{c|}{2}   & \multicolumn{3}{c}{1}  \\
        Single-qubit gates & \multicolumn{3}{c|}{150}  & \multicolumn{3}{c}{34} \\
        Two-qubit gates & \multicolumn{3}{c|}{68}  & \multicolumn{3}{c}{15} \\
        Circuit depth   & \multicolumn{3}{c|}{127} & \multicolumn{3}{c}{25} \\
        \hline\hline
        POVM total rank & \multicolumn{3}{c|}{12}  & \multicolumn{3}{c}{6} \\
        \hline\hline
         & \multicolumn{3}{c|}{Outcome probabilities}  & \multicolumn{3}{c}{Difference in outcome probabilities}\\
        Given state $\rho_i$ &
         $p(\Pi_{1}|\rho_i)$ & $p(\Pi_{2}|\rho_i)$ & $p(\Pi_{3}|\rho_i)$ &
         $p(\tilde{\Pi}_{1}|\rho_i)$ & $p(\tilde{\Pi}_{2}|\rho_i)$ & $p(\tilde{\Pi}_{3}|\rho_i)$ \\
         & & & & $-p(\Pi_{1}|\rho_i)$ & $-p(\Pi_{2}|\rho_i)$ & $-p(\Pi_{3}|\rho_i)$ \\
        \hline
        $\ket{\psi_1}$ in \Cref{eq:2-qubits states} & 0.99547 & 1.64E-3 &  2.89E-3 & $-$1.51E-12 & $-$3.61E-09 & $-$3.78E-09\\
        $\ket{\psi_2}$ in \Cref{eq:2-qubits states} & 1.66E-3 & 0.98188 & 1.65E-2 & $-$3.15E-09 &  $-$1.36E-11 & $-$1.86E-09\\
        $\ket{\psi_3}$ in \Cref{eq:2-qubits states} & 2.93E-3 & 1.65E-2 & 0.98059 & $-$3.03E-09 & $-$1.53E-09 & $-$2.08E-11\\
        \hline

    \end{tabular}
\end{table}

\color{black}
\section{Toolkit and Demonstration}
We present a lightweight Python toolkit that automatically generates quantum circuits for QSD problems according to predefined strategies. The toolkit integrates the SDP solver \emph{CVXPY}, an isometry matrix generator, and the isometry quantum circuit synthesis toolkit \emph{Qclib} within the \emph{Qiskit} framework. Its workflow comprises three stages: (i) construction of a QSD problem instance, specifying the quantum states with their prior probabilities and the noise model; (ii) synthesis of POVM operators according to predefined QSD strategies, which solves the POVM operators using \emph{CVXPY} and optionally provides approximate results; and (iii) quantum-circuit synthesis, which outputs the resulting \texttt{QuantumCircuit} object in \emph{Qiskit} through \emph{Qclib}.

In the following, we demonstrate a typical usage of our toolkit. The benchmarks for the computational efficiency of the functions in the toolkit are presented in Appendix~\ref{sef:performance}.


\subsection{Problem instance construction}
This step defines the basic configuration of the QSD problem using the \linebreak\texttt{ProblemSpec} class.
An example usage is shown below:
\begin{lstlisting}[language=Python]
from flow.solve_mix import *

# initialze the QSD problem instance
qsd_problem = ProblemSpec(num_qubits, num_states)

# set the prior probabilities by a numpy array
qsd_problem.prior_prob = np.ones(num_states) * (1 / num_states)

# specify the quantum states 
qsd_problem.set_states(state_type, states)

\end{lstlisting}

As shown in the example, the class initialization is intentionally designed to allow users to specify the number of quantum states to be discriminated and the number of qubits for each state before defining the actual quantum states.
When users need to replace the current set of quantum states, the \texttt{set\_states} function can be used, eliminating the need to reinitialize a new \texttt{ProblemSpec} object.
Additional information, such as prior probabilities, can also be specified later upon needed.

\subsection{POVM synthesis with an optimization strategy}
This step selects an optimization strategy and uses it to synthesize the POVM for the previously defined problem.
An example usage is shown below (continuing from the previous code block).
\begin{lstlisting}[language=Python]
from flow.build_circuits import *

crossqsd_prob = apply_crossQSD(
    qsd_problem,
    cvxpy_settings={"solver": cp.SCS, "verbose": False},
    alpha=[0.02] * qsd_problem.num_states,
    beta=[0.02] * qsd_problem.num_states
)

povm_vectors = get_povm_vectors(crossqsd_prob)
povm_ckt = POVMCircuit(povm_vectors=np.stack(povm_vectors))
povm_ckt.fix()
povm_ckt.build_circuit()
\end{lstlisting}

We provide a variety of commonly used optimization strategies in the \linebreak\texttt{flow.solve\_mix} package, which are summarized in \Cref{tab:solve_mix}.
The optimization process is implemented using CVXPY to obtain the optimal POVM corresponding to the selected strategy.
When using an optimization function, users must specify the target \texttt{ProblemSpec} instance, the configuration of the CVXPY solver, and any additional constraints required by the chosen optimization strategy, as described in \Cref{sec:error_uqsd} and listed in \Cref{tab:solve_mix}.

    \begin{table}[!h]
        \caption{Python function interfaces for the QSD formulations in the \texttt{flow.solve\_mix} package}
        \label{tab:solve_mix}
        \renewcommand{\arraystretch}{1.2}
        \setlength{\tabcolsep}{3pt}
        \centering
        \begin{tabular}{c|c|c|c}
            \hline
            function name & state type & QSD & extra arguments \\
            \hline
            \texttt{apply\_Eldar} & state vector & Optimal UQSD & \texttt{beta} \\
            \texttt{med\_problem} & density matrix & MED & (None) \\
            \texttt{med\_plus\_problem} & density matrix & \;\;MED$^{+}$ & (None) \\
            \texttt{apply\_frio} & density matrix & FRIO & \texttt{p\_inc\_lb} \\
            \texttt{apply\_crossQSD} & density matrix & CrossQSD & \texttt{alpha}, \texttt{beta}\\
            \texttt{min\_l1\_problem} & density matrix & FitQSD-minL1 & \texttt{ideal\_distrib} \\
            \texttt{min\_ss\_problem} & density matrix & FitQSD-minSS & \texttt{ideal\_distrib} \\
            \texttt{meco\_problem} & density matrix & FitQSD-MECO & \texttt{ideal\_distrib} \\
            \texttt{hybrid\_obj\_problem} & density matrix & Hybrid-objective & \texttt{ideal\_distrib, param\_a} \\
            \hline
        \end{tabular}
    \end{table}

After executing a selected optimizer (e.g., the \texttt{apply\_crossQSD} function in the example), a \texttt{POVMCircuit} instance can be created from the POVM vectors produced by the optimizer.
To enable the use of fewer qubits in circuit construction, users can explicitly specify the \texttt{num\_qubits} attribute.
Finally, the \texttt{fix()} function completes the POVM by adding residual components above a small singular-value threshold.
    
\subsection{Quantum circuit synthesis and resynthesis}
If the user requires a gate-level implementation, the compilation core can be invoked to generate the corresponding quantum circuit. 
An example usage is shown below (continuing from the previous code block).
\begin{lstlisting}[language=Python]
from flow.resynth import resynth_aqc
qc = povm_ckt.build_circuit("ccd")
qc = resynth_aqc(qc)
\end{lstlisting}
The \texttt{build\_circuit} function supports three synthesis schemes, \texttt{ccd}, \texttt{csd}, and \texttt{knill}, and internally employs the \texttt{qclib} package to generate the corresponding quantum circuits.
Note that when using the \texttt{ccd} scheme, the circuit is first extended to a full unitary before decomposition,spending additional computing time. 

Once the quantum circuit is generated, the estimated number of CNOT gates is displayed to the console for immediate review. 
The synthesized circuit is returned as a Qiskit \texttt{QuantumCircuit} instance for further use, and the provided \texttt{resynth\_aqc()} function enables approximate circuit resynthesis via Qiskit’s \texttt{transpile()} interface.


\section{Case Study: UQSD for Coherent States in Noisy Quantum Circuit Simulators}\label{sec:case study}
In this section, we demonstrate the use of our toolkit on quantum computers and simulate the QSD experiment using Qiskit's noisy quantum circuit simulator. For a fair performance assessment, we adopt the UQSD scheme with optimal POVM operators as the discrimination strategy. 

In the experiments, we consider three coherent states defined in \Cref{sec:crossqsd}, with parameters $\alpha = 1, e^{i \frac{2\pi}{3}}, e^{i \frac{4\pi}{3}}$. Since coherent states are intrinsically infinite-dimensional, they must be truncated to a finite Hilbert space for quantum circuit implementation. In the following processes, those coherent states are truncated into subspaces ranging from two to six qubits. The corresponding phase-space representations of the truncated states are shown in Figure~\ref{fig:coh_comparison}, which illustrates how the overlap among the three states diminishes as the truncation dimension increases.
\begin{figure}[htbp]
    \centering
    
    \subfloat[\#qubits = 3\label{fig:coh_q3_n3}]{%
    \includegraphics[width=.24\textwidth]{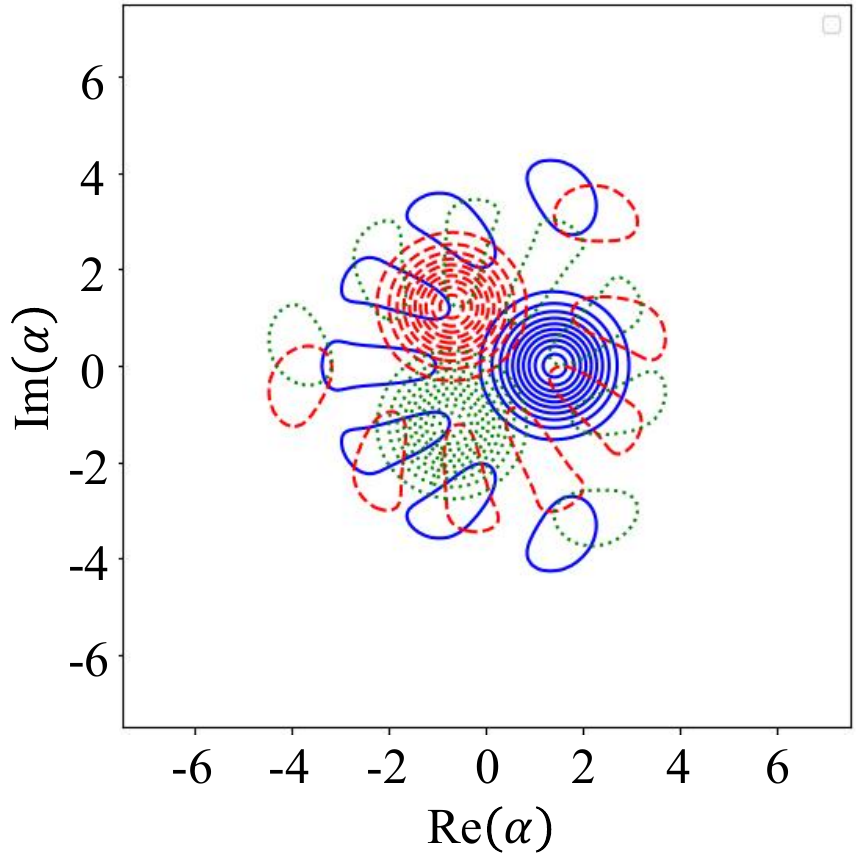}
    }\hfill
    \subfloat[\#qubits = 4\label{fig:coh_q4_n3}]{%
    \includegraphics[width=.24\textwidth]{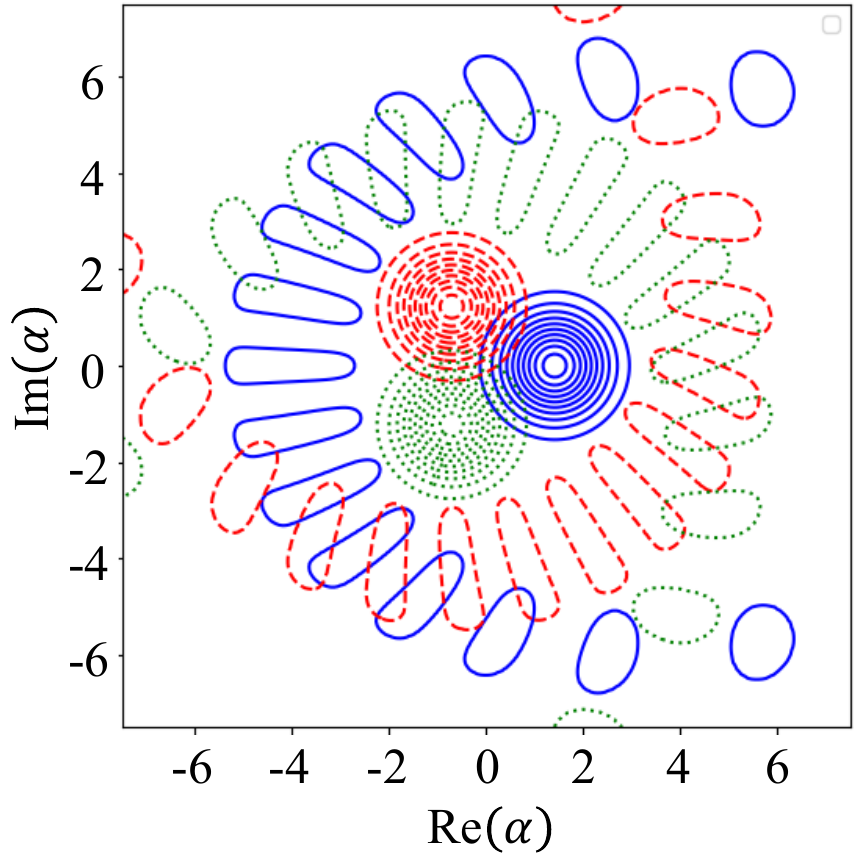}%
    }\hfill
    \subfloat[\#qubits = 5\label{fig:coh_q5_n3}]{%
    \includegraphics[width=.24\textwidth]{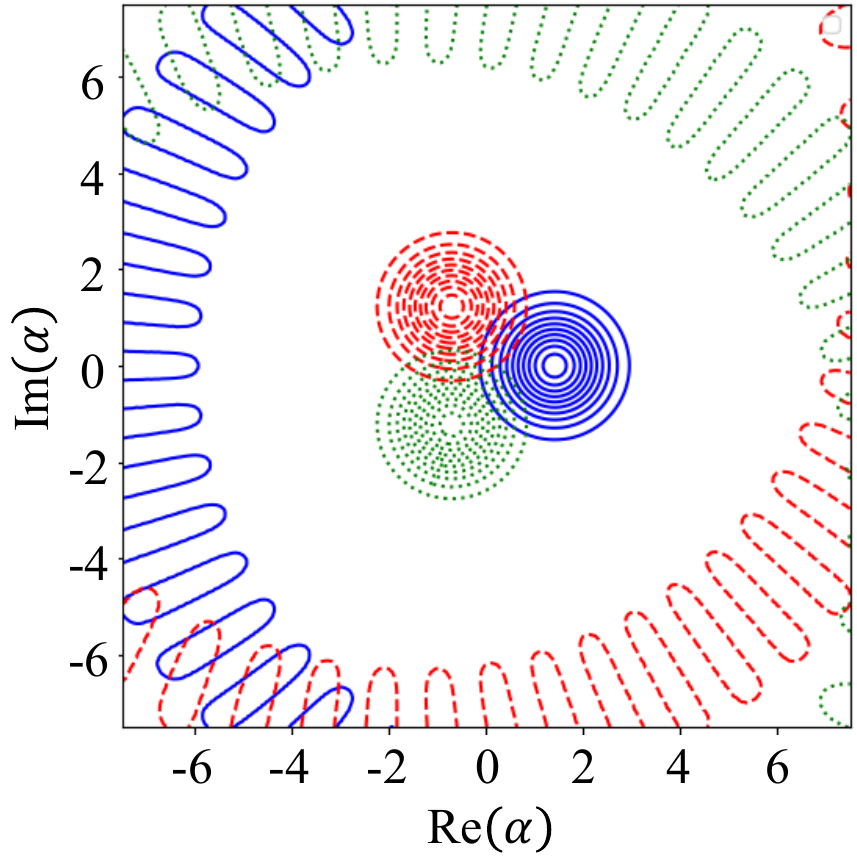}%
    }\hfill
    \subfloat[\#qubits = 6\label{fig:coh_q6_n3}]{%
    \includegraphics[width=.24\textwidth]{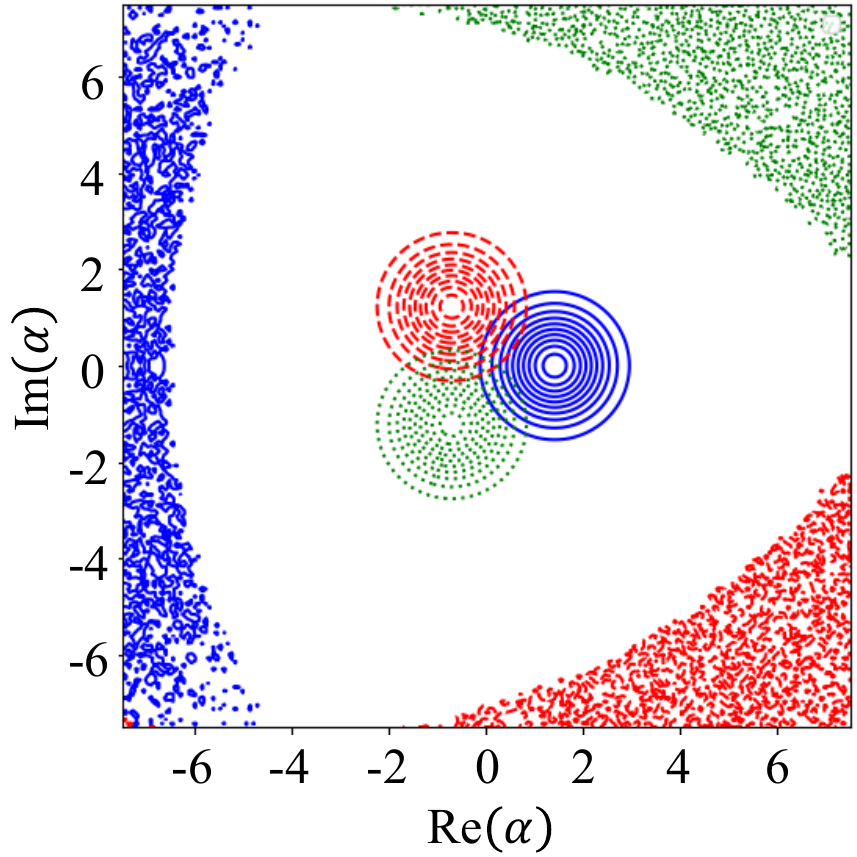}%
    }

    \caption{Wigner functions of truncated coherent states represented with 3 to 6 qubits.
    The blue, green, and red dots correspond to the states $\ket{\alpha}$ with
    $\alpha = 1, e^{i \frac{2\pi}{3}},$ and $e^{i \frac{4\pi}{3}}$, respectively.
    All figures are generated using QuTiP \cite{lambert2024qutip5quantumtoolbox}.}
  
    \label{fig:coh_comparison}
\end{figure}


For the QSD implementation, the optimal POVM operators are first obtained using \texttt{apply\_Eldar}, after which the corresponding quantum circuits are synthesized with \texttt{build\_circuit("ccd")}. The resulting circuits are then resynthesized into approximate forms using \texttt{approx\_circuit}, which adopts the AQC optimization settings from Qiskit’s \texttt{transpile()} function. The average circuit depth, two-qubit gate count, and state fidelity before and after approximation are summarized in \Cref{tab:synth_comparison} for the three representative states. Notably, no ancillary qubits are required in the gate-level synthesis, as the total ranks of the POVM operators in this example are equal to three.

    \begin{table}[htbp]
    \centering
    \caption{Comparison of quantum circuit synthesis methods for
        discriminating three coherent states.
        Here \#2Q means the number of two-qubit gates, which is
        also the number of CNOT gates in this case.
        The numbers for the state fidelity and process fidelity are rounded down to the fifth digit.}
    \label{tab:synth_comparison}
    \scalebox{0.63}{
        \renewcommand{\arraystretch}{1.2}
        \setlength{\tabcolsep}{3pt}
        \begin{tabular}{|c|c||rr||rrr|rrr|rrr|rrr|}
            \hline
            \multirow{3}{*}{\textbf{\#Qubit}}
              & \multicolumn{1}{c||}{\multirow{3}{*}{\tabincell{c}{\textbf{State}\\\textbf{Fidelity}}}}
              & \multicolumn{2}{c||}{\textbf{Original}}
              & \multicolumn{3}{c|}{\textbf{Resynthesis}}
              & \multicolumn{3}{c|}{\textbf{Approx\_deg = 1.00}}
              & \multicolumn{3}{c|}{\textbf{Approx\_deg = 0.97}}
              & \multicolumn{3}{c|}{\shortstack{\textbf{AQC (\#2Q = 10)}}}                                                                                                                                                       \\\cline{3-16}
              &                                                          & Depth & \#2Q & Depth & \#2Q & \tabincell{c}{Process\\Fidelity} & Depth & \#2Q & \tabincell{c}{Process\\Fidelity} & Depth & \#2Q & \tabincell{c}{Process\\Fidelity} & Depth & \#2Q & \tabincell{c}{Process\\Fidelity} \\
            \hline
            2 & 0.97803                                                  & 84    & 41    & 45    & 20   & 1.00000          & 29    & 14   & 0.99999          & 29    & 14   & 0.59409          & 21    & 10 & 0.35564          \\
            3 & 0.99998                                                  & 430   & 218   & 225   & 100  & 0.99999          & 83    & 61   & 0.99992          & 83    & 61   & 0.88925          & 15    & 10 & 0.96252          \\
            4 & 0.99999                                                  & 2034  & 1025  & 1009  & 444  & 0.99999          & 253   & 252  & 0.99989          & 253   & 252  & 0.13754          & 11    & 10 & 0.32505          \\
            5 & 0.99999                                                  & 8916  & 4474  & 4273  & 1868 & 0.99999          & 817   & 1020 & 0.99990          & 817   & 1020 & 0.00032          & 9     & 10 & 0.25978          \\
            6 & 0.99999                                                  & 37220 & 18634 & 17585 & 7660 & 0.99999          & 2729  & 4091 & 0.99984          & -     & -    & -                & -     & -  & -                \\
            \hline
        \end{tabular}
    }
\end{table}

Finally, we perform the QSD protocol for the six-qubit truncated states using Qiskit’s noisy quantum circuit simulator, \texttt{Qiskit\_Aer}. The simulator is configured to emulate depolarization noise on every two-qubit gate, defined as 
\begin{align}
\mathcal{E}(\rho^{(2)}) = (1 - \lambda) \rho^{(2)} + \lambda \frac{I^{(2)}}{2^n},
\end{align}
where $\lambda$ denotes the level of depolarizing noise, $n$ is the total number of qubits in the circuit, and $\rho^{(2)}$ is the two-qubit reduced density matrix that after gate application. Simulations are performed for various $\lambda$ values ranging from $10^{-6}$ to $10^{-1}$, with each configuration executed using $1024$ shots. The measurement outcomes are subsequently post-processed to estimate the empirical success probabilities.

The experimental results are presented in Figure~\ref{fig:noisy_simulation}. The success probabilities for discriminating each truncated six-qubit coherent state remain close to the noiseless values for $\lambda < 10^{-3}$, but drop sharply when $\lambda > 10^{-3}$. These findings confirm our earlier observation that UQSD fails immediately once the noise exceeds a critical threshold.

\begin{figure}[h]
\centering
\includegraphics[width=0.85\textwidth]{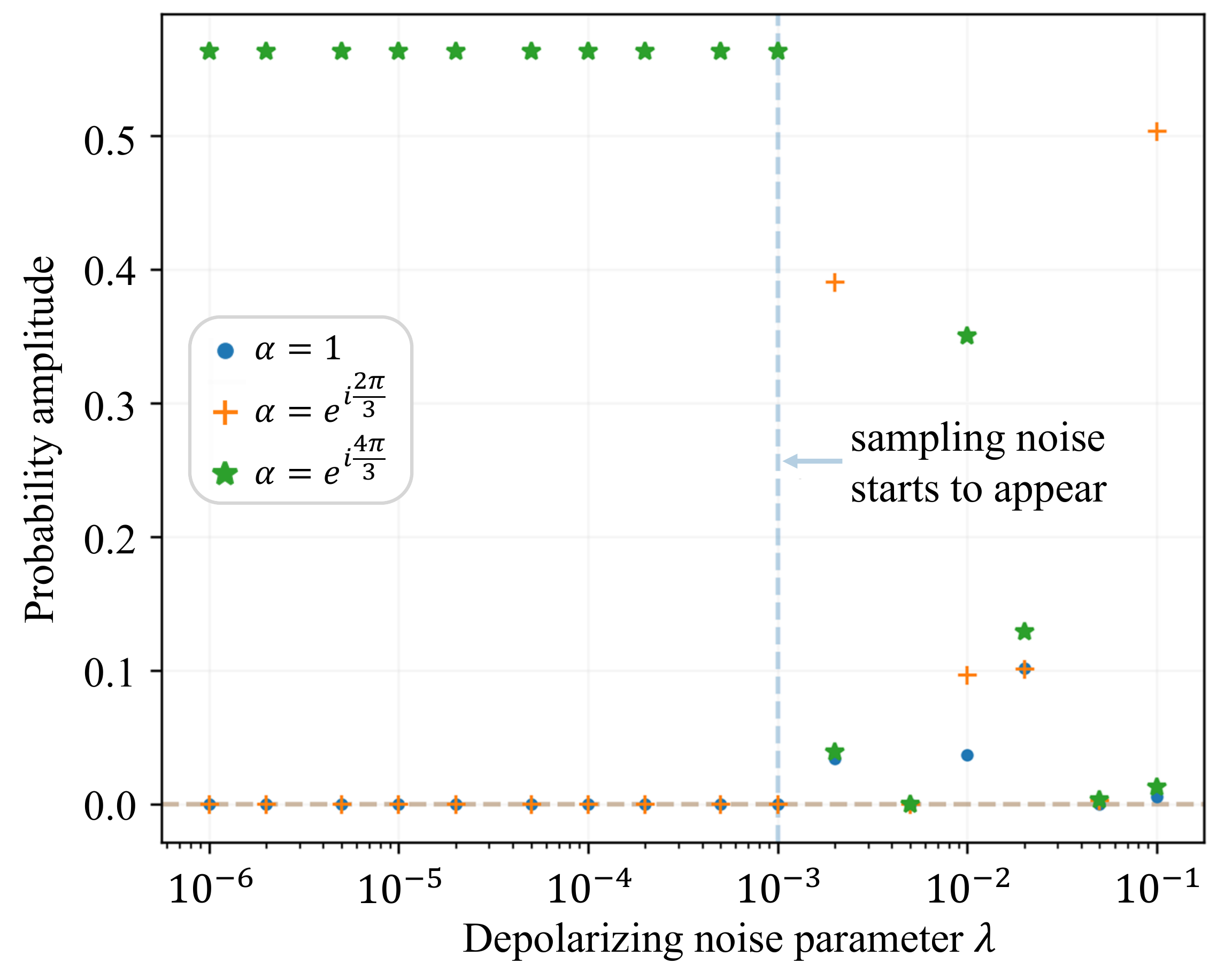}
\caption{Experimental success probabilities for discriminating each six-qubit-truncated coherent states vs. noise level $\lambda$. 
}
\label{fig:noisy_simulation}
\end{figure}

\section*{Acknowledgments}

This work was supported in part by the National Science and Technology Council of Taiwan under grants 114-2119-M-002-020, the NTU Center of Data Intelligence: Technologies, Applications, and Systems under grant NTU-114L900903, and the NTU Core Consortium Project NTU-CC114L895002.

\section*{Data Availability Statement}

The source code and benchmark data used in this study are publicly available at
\url{https://zenodo.org/records/17493285}.

\newpage
\bibliographystyle{splncs04}
\bibliography{reference.bbl}

\appendix
\crefalias{subsection}{appendix}
\section{Proofs}

\subsection{Proof of \Cref{thm:med_medplus}}
\label{sec:proof-1}
\begin{proof}
For the given $\{\rho_{1}, \rho_{2}, \dots, \rho_{k}\}$ and $\{p_{1}, p_{2}, \dots, p_{k}\}$, let $\mathcal{K}$ and $\mathcal{K}^+$ denote the collections of all feasible POVMs that satisfy the constraints of MED and MED${}^+$, respectively, regardless of optimality.
For any $K \equiv \{\Pi_1, \Pi_2, \dots, \Pi_k\} \in \mathcal{K}$, the corresponding success probability is
\begin{equation}
(\psucc)^K = \sum_{i=1}^k p_i \Tr(\rho_i \Pi_i).
\end{equation}
Similarly, for any $K^+ \equiv \{\Pi_1, \Pi_2, \dots, \Pi_k, \Pi_{k+1} \equiv \Pi_?\} \in \mathcal{K}^+$, the corresponding success probability is
\begin{equation}
(\psucc)^{K^+} = \sum_{i=1}^k p_i \Tr(\rho_i \Pi_i),
\end{equation}
where the inconclusive operator $\Pi_?$ does not contribute to $\psucc$ by definition.  
The optimal success probabilities are then given by
\begin{equation}
(\psucc)\supmed = \max_{K \in \mathcal{K}} (\psucc)^K, \qquad
(\psucc)\supmedplus = \max_{K^+ \in \mathcal{K}^+} (\psucc)^{K^+}.
\end{equation}

Let $\bar{K} \equiv \{\bar{\Pi}_1, \bar{\Pi}_2, \dots, \bar{\Pi}_k\} \in \mathcal{K}$ denote the optimal MED POVM.
Construct the corresponding MED${}^+$ POVM
\begin{equation}
\bar{K}^+ \equiv \{\bar{\Pi}_1, \bar{\Pi}_2, \dots, \bar{\Pi}_k, \bar{\Pi}_? = 0\} \in \mathcal{K}^+.
\end{equation}
Since including a null operator does not affect the success probability, we have
\begin{equation}
(\psucc)^{\bar{K}^+} = (\psucc)^{\bar{K}} = (\psucc)\supmed.
\end{equation}
As $\bar{K}^+$ is a feasible (though not necessarily optimal) MED${}^+$ POVM, it follows that
\begin{equation}\label{ineq:a}
(\psucc)\supmedplus \geq (\psucc)^{\bar{K}^+} = (\psucc)\supmed.
\end{equation}

Conversely, let $\tilde{K}^+ \equiv \{\tilde{\Pi}_1, \tilde{\Pi}_2, \dots, \tilde{\Pi}_k, \tilde{\Pi}_?\} \in \mathcal{K}^+$ denote the optimal MED${}^+$ POVM.  
Construct a corresponding MED POVM as
\begin{equation}
\tilde{K} \equiv \{\tilde{\Pi}_1, \tilde{\Pi}_2, \dots, \tilde{\Pi}_{k-1},\tilde{\Pi}_k + \tilde{\Pi}_?\} \in \mathcal{K}.
\end{equation}
The associated success probability is then
\begin{equation}
(\psucc)^{\tilde{K}}=  \sum_{i=1}^{k} p_i \operatorname{Tr}\left(\rho_i \tilde{\Pi}_i \right) + p_k \operatorname{Tr}\left(\rho_k \tilde{\Pi}_? \right)
\geq (\psucc)\supmedplus,
\end{equation}
where equality holds if and only if $\tilde{\Pi}_? = 0$.  
Since $\tilde{K}$ is a feasible MED POVM (though not necessarily optimal), we have
\begin{equation}\label{ineq:b}
(\psucc)\supmed \geq (\psucc)^{\tilde{K}} \geq (\psucc)\supmedplus.
\end{equation}

Combining \Cref{ineq:a} and \Cref{ineq:b} yields
\begin{equation}
(\psucc)\supmed = (\psucc)\supmedplus,
\end{equation}
which implies that the optimal MED${}^+$ solution must have $\Pi_? = 0$, completing the proof.
\end{proof}

\subsection{Proof of \Cref{corollary}}
\label{sec:proof-2}

\begin{proof}
Both MED${}^+$ and UQSD can be formulated as convex optimization problems sharing the same objective function. The only distinction lies in their constraints: UQSD additionally imposes the no-misidentification condition: $\Tr(\rho_i \Pi_j) = 0$ for all $i \neq j$ with $i,j = 1,\dots,k$. Consequently, the collection of POVMs satisfying the UQSD constraints, regardless of optimality, forms a subset of that satisfying the MED${}^+$ constraints, i.e.,
\begin{equation}\label{inclusion rel}
\mathcal{K}_\mathrm{UQSD} \subseteq \mathcal{K}_{\mathrm{MED}^+}.
\end{equation}
Since both optimization problems maximize the same linear functional but over different convex feasible spaces, $\mathcal{K}_\mathrm{UQSD}$ and $\mathcal{K}_{\mathrm{MED}^+}$, respectively, the optimal value of MED${}^+$ cannot be smaller than that of UQSD because of the inclusion relation \Cref{inclusion rel}.  
Finally, invoking \Cref{thm:med_medplus}, we obtain
\begin{equation}
(\psucc)\supmed = (\psucc)\supmedplus \geq (\psucc)\supuqsd,
\end{equation}
which completes the proof.
\end{proof}

\subsection{Proof of \Cref{thm:min dilation}}

\label{sec:proof-3}
\begin{proof}
By the singular value decomposition (or equivalently, the LDL decomposition), each $\Pi_i$ can be expressed as a sum of rank-1 positive-definite operators:
\begin{equation}\label{eq:ldl}
\Pi_i
= L_i \Sigma_i L_i^\dagger
= \sum_{j=1}^{r_i} \sigma_{i,j} \ket{l_{i,j}}\bra{l_{i,j}}
\equiv \sum_{j=1}^{r_i} |\tilde{f}_{i,j}\rangle\langle\tilde{f}_{i,j}|
= \sum_{j=1}^{r_i} F_{i,j},
\end{equation}
where $\Sigma_i = \sum_{j=1}^{r_i} \sigma_{i,j} \ket{j}\bra{j}$ (that is, $\sigma_{i,j}$ is the $j$th singular value), $\ket{l_{i,j}} = L_i \ket{j}$, $|\tilde{f}_{i,j}\rangle = \sqrt{\sigma_{i,j}} \ket{l_{i,j}}$, and $F_{i,j} = |\tilde{f}_{i,j}\rangle\langle\tilde{f}_{i,j}|$.\footnote{The tilde in $|\tilde{\psi}\rangle$ indicates that the state may not be normalized.}

Since each $P_{i,j}$ is rank-1, it can be written as $P_{i,j} = \ket{g_{i,j}}\bra{g_{i,j}}$.  
We then define the isometry
\begin{equation}\label{eq:V new method}
V = \sum_{i=1}^k \sum_{j=1}^{r_i} \ket{g_{i,j}}_{A'} |\tilde{f}_{i,j}\rangle_A,
\end{equation}
which immediately yields
\begin{equation}\label{eq:VFV}
V F_{i,j} V^\dagger = P_{i,j}.
\end{equation}
Combining \Cref{eq:ldl} and \Cref{eq:VFV}, we obtain
\begin{equation}
V \Pi_i V^\dagger = \sum_{j=1}^{r_i} P_{i,j}.
\end{equation}

Finally, since $\{\ket{g_{i,j}}\}$ form an orthonormal basis as $\mathbf{P}$ is a PVM, and
\begin{equation}
\sum_{i=1}^k \Pi_i
= \sum_{i=1}^k \sum_{j=1}^{r_i} |\tilde{f}_{i,j}\rangle\langle\tilde{f}_{i,j}|
= I_A,
\end{equation}
it follows that
\begin{equation}
V^\dagger V = I_A,
\end{equation}
confirming that $V$ is norm-preserving and therefore an isometry.
\end{proof}

\section{Performance benchmarks}\label{sef:performance}
To assess the computational efficiency of the functions provided by our toolkit, we benchmark their runtime in relation to the number of qubits of the predefined states.

For each QSD scheme, we time the following three tasks:
\begin{enumerate}
    \item \emph{Task I}: Given a QSD instance, solve the corresponding optimization problem to obtain the POVM $\{\Pi_i\}$.
    \item \emph{Task II}: Convert the POVM $\{\Pi_i\}$ into an equivalent rank-1 POVM $\{F_i\}$.
    \item \emph{Task III}: Determine the isometry $V$ from $\{F_i\}$ and synthesize the corresponding isometry circuit using \texttt{qclib}.
\end{enumerate}

All benchmarks are performed on a machine equipped with two Intel Xeon Silver 4314 CPUs (2.40~GHz, max 64 threads) and 128~GB of RAM. We test systems ranging from 2 to 6 qubits; the cases of more than 6 qubits are not included because the available memory on our test machine is insufficient to complete the computation.

The exact settings used for these benchmarks are listed in \Cref{tab:benchmark parameters}, and the benchmark results are presented in \Cref{fig:Task I}--\Cref{fig:Task III}.

    \begin{table}[!h]
        \caption{The predefined states, prior probabilities, and specific user-defined parameters used for the benchmarks in different QSD schemes as presented \Cref{fig:Task I}--\Cref{fig:Task III}.}
        \label{tab:benchmark parameters}
        \renewcommand{\arraystretch}{1.2}
        \setlength{\tabcolsep}{3pt}
        \centering
        \begin{tabular}{c|c|c}
            \hline
            $\rho_i$ and $p_i$ & QSD scheme & User-defined parameters \\
            \hline
            \multirow{9}{4.7cm}{Predefined states:\\ \hspace{0.3cm} The three truncated coherent states corresponding to $\alpha = 1, e^{2\pi i/3}$, and $e^{4\pi i/3}$ as described in \Cref{sec:case study}.\\ \vspace{0.3cm} Prior probabilities:\\ \hspace{0.5cm} $p_1=p_2=p_3=1/3$.} & Optimal UQSD & (None) \\
             & MED & (None) \\
             & MED$^{+}$ & (None) \\
             & FRIO & Fixed Rate of IO: $0.1$ \\
             & CrossQSD & $\alpha_{1,2,3}=0.1$, $\beta_{1,2,3}=0.1$\\
             & FitQSD-minL1 & $\ell=1$ \\
             & FitQSD-minSS & $\ell=2$ \\
             & FitQSD-MECO & (None) \\
             & Hybrid-objective & $\ell=1$, $w=0.3$ \\
            \hline
        \end{tabular}
    \end{table}

    \begin{figure}
        \centering
        \includegraphics[width=0.75\linewidth]{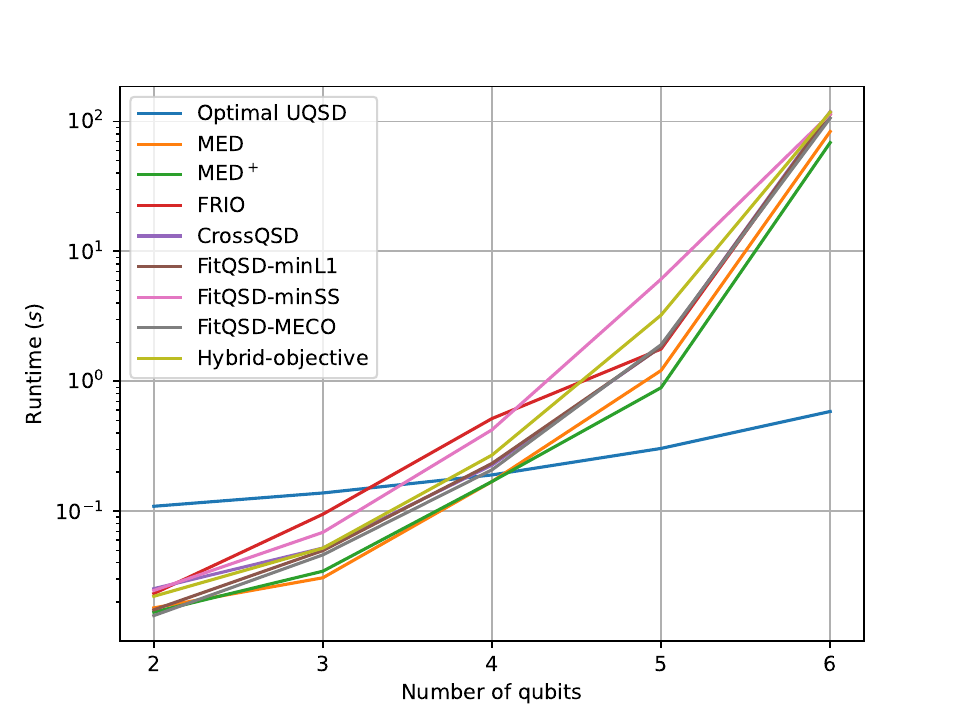}
        \caption{Runtime benchmarks for Task~I.}
        \label{fig:Task I}
    \end{figure}
    \begin{figure}
        \centering
        \includegraphics[width=0.75\linewidth]{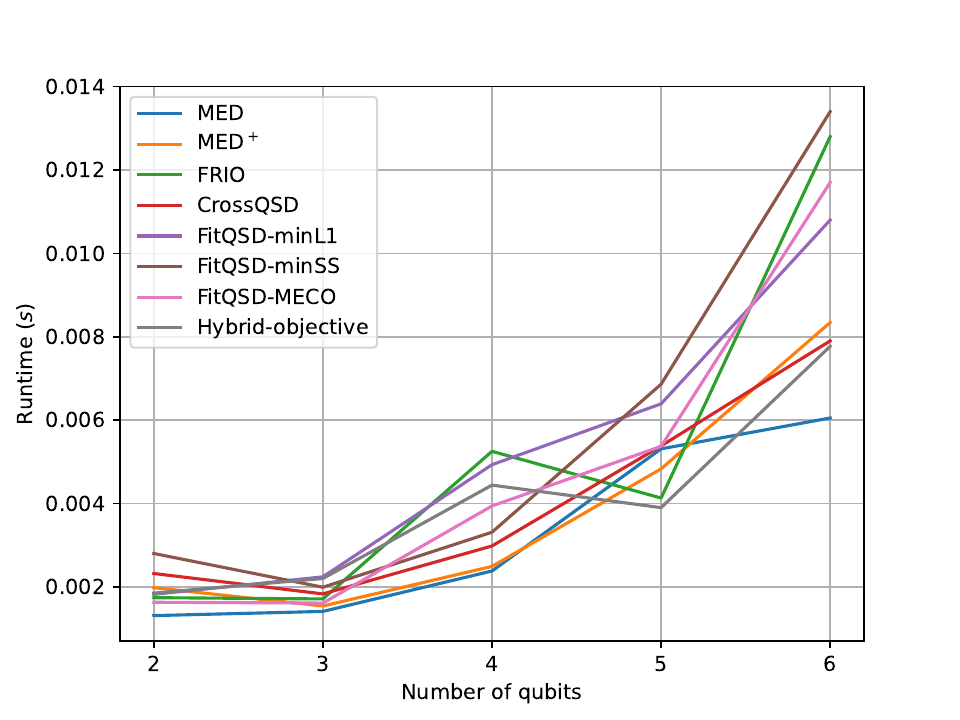}
        \caption{Runtime benchmarks for Task~II.}
        \label{fig:Task II}
    \end{figure}
    \begin{figure}
        \centering
        \includegraphics[width=0.75\linewidth]{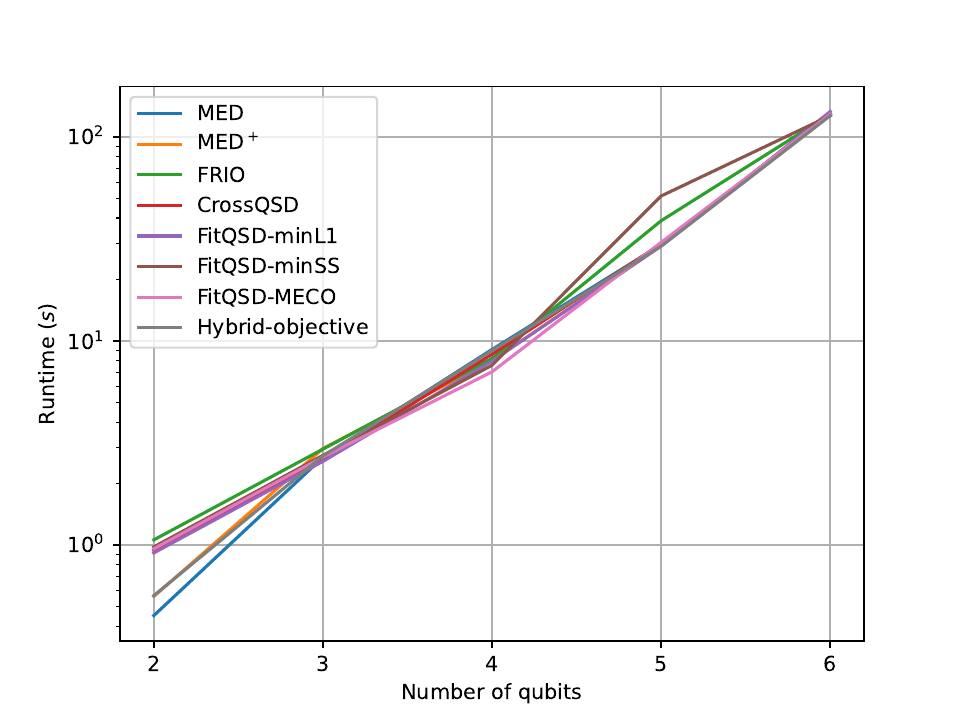}
        \caption{Runtime benchmarks for Task~III.}
        \label{fig:Task III}
    \end{figure}

\end{document}